\documentclass{article}
\usepackage{arxiv}
\newcommand{\emps}{\epsilon}
\renewcommand{\mathtt}[1]{\mbox{\texttt{#1}}}

\usepackage{latexsym,amsmath,amssymb,color,graphicx}

\newcommand{\fitcite}{\hspace{-1pt}\hspace{1pt}}

\newenvironment{Array}[1]{\begin{array}{@{}#1@{}}}{\end{array}}

\makeatletter
\newenvironment{prog}[2][{}]{%
\def\@endOfProgMark{#1}%
\setbox0=\hbox\bgroup$\begin{Array}{#2}%
}
{\end{Array}$\egroup%
\newdimen{\tmpdimen}\setlength{\tmpdimen}{\ht0}\addtolength{\tmpdimen}{\dp0}%
\addtolength{\tmpdimen}{-\baselineskip}\begin{center}%
    \makebox[0pt]{\box0}%
    \makebox[0pt]{\hbox to \linewidth {\hfill\raisebox{-.5\tmpdimen}{\@endOfProgMark}}}%
  \end{center}%
}%
\makeatother

\newcommand{\mi}[1]{{\ifmmode{\mathit{#1}}\else{$\mathit{#1}$}\fi}}
\newcommand{\mbf}[1]{{\ifmmode{\mathbf{#1}}\else{$\mathbf{#1}$}\fi}}
\newcommand{\ms}[1]{{\ifmmode{\mathsf{#1}}\else{$\mathsf{#1}$}\fi}}
\newcommand{\mr}[1]{{\ifmmode{\mathrm{#1}}\else{$\mathrm{#1}$}\fi}}
\DeclareMathAlphabet{\mathsc}{OT1}{cmr}{m}{sc}

\newcommand{\setof}[1]{\{#1\}}
\newcommand{\scmp}[2]{\{{#1}\mid{#2}\}}

\newcommand{\ra}{\Rightarrow}

\newcommand{\any}{\raisebox{-0.3em}{$-$}}

\newcommand{\typeofOp}{:}
\newcommand{\typeof}[2]{{#1}\mathbin{\typeofOp}{#2}}

\renewcommand{\any}{\centerdot}

\newcommand{\shorteq}{\mbox{\raisebox{1pt}{--}\hspace{-0.5em}\raisebox{-0.5pt}{--}}}

\newcommand{\qed}{\hfill\mbox{$\Box$}}

\usepackage{cite,doi}
\usepackage[all]{xy}
\newtheorem{theorem}{Theorem}
\newtheorem{corollary}[theorem]{Corollary}
\newtheorem{lemma}[theorem]{Lemma}
\newtheorem{definition}[theorem]{Definition}
\newtheorem{proof}{Proof Sketch}

\renewcommand{\prog}[3]{%
  \setbox0=\hbox{$\begin{array}{@{}#1@{}}#2\end{array}$}%
  \newdimen{\tmpdimen}\setlength{\tmpdimen}{\ht0}\addtolength{\tmpdimen}{\dp0}
  \addtolength{\tmpdimen}{-\baselineskip}\[
  \hbox to \linewidth {
    \hfill\makebox[0pt][c]{$\begin{array}{@{}#1@{}}#2\end{array}$}\hfill
    \makebox[0pt][r]{\raisebox{-.5\tmpdimen}{#3}
    }
  }\]
}

\title{Translation of Regular Expression with Lookahead into Finite State Automaton}
\author{Akimasa Morihata}

\begin{document}

\maketitle

\begin{abstract}
Most of the conventional implementations of regular expressions are based on backtracking. Such implementations are slow in the worst case, and thus,  
we would like to develop a better matching algorithm. 
However, it is nontrivial to provide an efficient matching algorithm
that can deal with practical extensions including submatch addressing.
This paper studies regular expression with lookaheads and negative lookaheads, abbreviated to REwLA. 
First, we propose a transformation from a REwLA of size $m$ to a deterministic finite automaton of $\mr{O}(2^{2^m})$ states.
Next, we consider weighted regular expressions, which enable us to calculate submatch addressing. We propose a transformation from a weighted REwLA of size $m$ to 
a weighted nondeterministic finite automaton of $\mr{O}(2^{2^m})$ states.
\end{abstract}

\thispagestyle{empty}

\newcommand{\pla}[1]{{(?{\shorteq}\,{#1})}}
\newcommand{\nla}[1]{{(?!\,{#1})}}
\newcommand{\rept}{{*}\,}

\section{Introduction}
Regular expressions have been studied for a long time, and many interesting results are known, including their correspondence to finite state automata.
At the same time, regular expressions are widely used in practice, notably in lexical analyzers and scripting languages.
However, the regular expressions used in, for example, Perl and Ruby differ from the theoretically well-known ones in various respects.

One of the largest differences is that, although it is well known that converting a regular expression into an NFA (Nondeterministic Finite Automaton) or a DFA (Deterministic Finite Automaton) enables efficient regular expression matching (i.e., checking whether a string is contained in the language denoted by the expression), in practice most regular expression processing systems use backtracking.
For this reason, in current Ruby\footnote{We conducted the experiment with Ruby 1.8.7.}, for example,
the following matching requires time proportional to the exponential of the string length.
\[\mbox{``}\mathtt{a}\mathtt{a}\cdots \mathtt{a}\mathtt{b}\mbox{''} \mathbin{=\sim}
/((\mathtt{a}*)\rept\mathtt{b})\rept\mathtt{b}/\]

{\def\fst{\mathtt{'}\backslash\mathtt{1'}}

Many processing systems use backtracking mainly because of the extraction of submatches.
Consider $\mathtt{b} \mid (\mathtt{a}\mid \mathtt{b})\rept\mathtt{bb}$,
which denotes the same language as $((\mathtt{a}\rept)\rept\mathtt{b})\rept\mathtt{b}$. While Ruby can perform matching efficiently for $\mathtt{b} \mid (\mathtt{a}\mid \mathtt{b})\rept\mathtt{bb}$, we cannot substitute it for $((\mathtt{a}\rept)\rept\mathtt{b})\rept\mathtt{b}$, because their semantics are different: the results of performing $\mathtt{str}.\mathtt{sub}(/((\mathtt{a}\rept)\rept\mathtt{b})\rept\mathtt{b}/,\fst)$ and
$\mathtt{str}.\mathtt{sub}(/\mathtt{b} \mid (\mathtt{a}\mid \mathtt{b})\rept\mathtt{bb}/,\fst)$ are in general clearly different, where $\mathtt{sub}$ is the method of replacing the matched substring and $\fst$} is the reference to the string captured by the first parenthesis.
This implementation prohibits applying many of the well-known techniques for regular expressions. This problem has been recognized relatively recently, and
techniques that compute matching without backtracking have been proposed and implemented \cite{DuFe00,Laur01,Cox07,FiHW10}.

Another major difference is that the real-world regular expressions have many extensions.
For example, Ruby's regular expressions include character classes, matching against the beginning of a line or a word, interval quantifiers, control of greedy and non-greedy matching, back-references, lookaheads, and so on.
Many of these extensions have been studied, but there are two representative extensions that existing methods cannot handle: back-references and lookaheads.

In this paper, we study the matching of regular expressions with lookaheads and negative lookaheads (Regular Expression with (negative) Lookaheads, abbreviated below as {\em REwLA}).
A lookahead checks whether the string from that position matches a specified regular expression without fetching the next character (i.e., moving the position forward).
Following Perl, Ruby, and others, we denote the positive and negative lookaheads by regular expression $e$ as $\pla{e}$ and $\nla{e}$, respectively.
For example, the following REwLA matches a string containing all of $\mathtt{a}$, $\mathtt{b}$, and $\mathtt{c}$.
\[\pla{{\any}*\mathtt{a}}\pla{{\any}*\mathtt{b}}\pla{{\any}*\mathtt{c}}\any\rept\]
Here, $\any$ denotes the pattern matching any single character.
Thus, $\pla{{\any}*\mathtt{a}}$ requires that $\mathtt{a}$ occurs somewhere in the string.
Writing this without using lookaheads is cumbersome, because the possible orderings of $\mathtt{a}$, $\mathtt{b}$, and $\mathtt{c}$ must be taken into account. For another example, the following REwLA matches a comment in the C language.
Note that $\mathtt{*}$ here is a character, not the $*$ operator of regular expressions.
\[\mathtt{/*} (\nla{\mathtt{*/}}\any)\rept \mathtt{*/}\]
$\nla{\mathtt{*/}}$ is a negative lookahead that matches strings not starting from $\mathtt{*/}$.
Thus, $(\nla{\mathtt{*/}}\any)\rept$
matches any string that contains no $\mathtt{*/}$.
Using regular expressions without negative lookaheads, this example would become fairly complicated and a potential source of bugs.
As seen above, REwLAs are useful, and an efficient matching algorithm is desirable.

Nevertheless, to the best of the author's knowledge, there has been almost no formal discussion of REwLA.
If efficiency is disregarded, matching with REwLA is easy.
One can simply check, at each position, whether a lookahead succeeds.
This naive method has a worst-case running time quadratic in the string length.
For processing large text data, it can hardly be said to be practical.

In this paper, we show a translation from REwLA to automata, and thereby provide a matching algorithm.
This approach is desirable for the following three reasons.
\begin{itemize}
\item It is based on a well-studied theory. This allows existing results to be applied for improvements and extensions.
\item The running time is proportional to the input string length, and the space complexity is independent of the input string length; hence, even very long strings can be handled efficiently.
\item The string is scanned only once from the beginning to the end.
This is suitable for online processing, such as the analysis of packets flowing through a network.
\end{itemize}

One may notice the similarity between lookahead and the intersection operation. Since the intersection of regular languages is regular,
it is natural to expect that REwLA easily corresponds to automata.
However, the construction of automata from REwLA is nontrivial.
For instance, consider the following REwLA.
\[(\pla{\mathtt{aa}}\mathtt{a})\rept\mathtt{a}\]
This REwLA matches a sequence of one or more $\mathtt{a}$'s. However,
its subexpression $\pla{\mathtt{aa}}\mathtt{a}$ matches only the empty string.
This example indicates that it seems impossible to construct the automaton corresponding to a REwLA in the same method as for the ordinary regular expressions.
In fact, for REwLA, nobody has explicitly proved that the language it denotes is regular, let alone its efficient matching algorithm.

In Section \ref{section:DFA}, we show a method to construct a DFA from a REwLA.
The idea is to convert a REwLA into a Boolean Finite Automaton \cite{BrLe80} (abbreviated to BFA).
This method provides an efficient matching algorithm for the case where submatches are not needed, and at the same time, serves as a proof that the language denoted by a REwLA is regular.
The construction obtains, from a REwLA of length $m$, a DFA with $\mr{O}(2^{2^m})$ states.

Then, in Section \ref{section:weight}, we consider weighted REwLA.
Weighted regular expressions \cite{DrKV09} enable us to express various computations including the extraction of submatches.
We show the translation from weighted REwLA to weighted NFA
by combining the method of the previous section with the standard weighted NFA construction method.
The number of states of the weighted NFA obtained from a REwLA of length $m$ is $\mr{O}(2^{2^m})$, and
therefore the matching for an input string of length $n$ can be computed in $\mr{O}(n \cdot 2^{2^m})$ time.

\renewcommand{\pla}[1]{{\langle?{\shorteq}\,{#1}\rangle}}
\renewcommand{\nla}[1]{{\langle?!\,{#1}\rangle}}

 \section{Regular Expressions with Positive and Negative Lookaheads}
\subsection{Preliminary}
$|S|$ denotes the number of elements in the set $S$.
A function from set $A$ to set $B$, denoted by $\typeof{f}{A \to B}$,
is a subset of $A \times B$ satisfying
$((a,b) \in f \wedge (a,b') \in f) \ra b = b'$.
$\setof{e_1 \mid e_2}$ is the minimum set satisfying  $\forall x_1,\ldots,x_n.\, (e_2 \ra (e_1 \in \setof{e_1 \mid e_2}))$, where $x_1,\ldots,x_n$ are free variables in $e_2$ independent from $e_1$.
For example, given functions $f$ and $g$,
$\setof{(a,c)\mid(a,b) \in f \wedge (b,c)\in g}$ is the function composition $g \circ f$.
For associative-commutative binary operator $\oplus$ with the unit on set $S$,
$\bigoplus S$ is a shorthand of $\bigoplus_{x \in S} x$.

$\mathcal{F}_{X}$ is the set of propositional formulas over propositional variables $X$.
Note that
$\mathcal{F}_X$ consists of $\mr{O}(2^{2^{|X|}})$ formulas up to equivalence.
Given formula $f \in \mathcal{F}_{X}$ and propositional logical variables $T \subseteq X$,
$T \vdash f$ denotes the fact that $f$ is true if each $x \in T$ is true and each $x' \in X\setminus T$ is false.

This paper considers a finite alphabet $\Sigma$.
$\sigma$ is a metavariable ranging over $\Sigma$.
$\Sigma^*$ is the set of strings over $\Sigma$.
The empty string and the string concatenation are denoted by $\emps$ and $\cdot$, respectively.
These notations are also used for regular expressions. We omit $\cdot$ unless it causes confusion.

Unless otherwise stated, $n$ and $m$ are the lengths of the input string and the regular expression, respectively.

Basic knowledge of regular expressions and finite state automata is assumed.
Refer to textbooks such as \cite{HoMU06} if necessary.
The functional language Haskell \cite{haskell98} is used to express computations.

\subsection{Regular Expressions with Positive and Negative Lookaheads}
We consider \emph{Regular Expressions with Positive and Negative Lookaheads} (abbreviated to REwLA), whose syntax is defined below.
\[\begin{array}{@{}l@{~~}c@{~~}l@{}}
e &::=& 
\emps
~\,{\mid}~\, \sigma
~\,{\mid}~~ e \mid e
~~{\mid}~\, e \cdot e
~\,{\mid}~\, e{*}
~\,{\mid}~\, \pla{e}
~\,{\mid}~\, \nla{e}
\end{array}\]
Here, $\sigma \in \Sigma$, and
$\pla{e}$ and $\nla{e}$ are respectively positive and negative lookaheads by REwLA $e$.
Regarding operator precedence,
${*}$ is the most tightly bound and $\mid$ is the weakest.

The semantics of REwLA is defined as follows. Note that the semantics is different from the one in Ruby.
The correspondence of these two semantics can be expressed by $(?{\shorteq}\,e) \equiv \pla{e\,{\any*}}$ and $(?!\,e) \equiv \nla{e\,{\any*}}$, where $\any$ is the pattern matching any single character in $\Sigma$.
\begin{definition}
The language defined by REwLA $e$, denoted by $\mathcal{L}(e)$, is defined as follows.
\[\begin{array}{@{}lcl@{}}
\lefteqn{w \in \mathcal{L}(e) \iff \emps \in \mathcal{L}'(e,w)}\\
\mathcal{L}'({\emps},w) &=& \setof{w}\\
\mathcal{L}'(\sigma,\emps) &=& \emptyset\\
\mathcal{L}'(\sigma,\sigma' w) &=& \mbf{if}~\sigma = \sigma'~\mbf{then}~\setof{w}~\mbf{else}~\emptyset\\
\mathcal{L}'(e \mid e',w) &=& \mathcal{L}'(e,w) \cup \mathcal{L}'(e',w)\\
\mathcal{L}'(e \cdot e',w) &=& \scmp{y}{x \in \mathcal{L}'(e,w) \wedge y \in \mathcal{L}'(e',x)}\\
\mathcal{L}'(e{*},w) &=&
\setof{w} \cup
\scmp{y}{x \in \mathcal{L}'(e,w) \wedge x \neq w \wedge y \in \mathcal{L}'(e{*},x)}\\
\mathcal{L}'(\pla{e},w) &=&
\mbf{if}~w \in \mathcal{L}(e)~\mbf{then}~\setof{w}~\mbf{else}~\emptyset\\
\mathcal{L}'(\nla{e},w) &=&
\mbf{if}~w \not\in \mathcal{L}(e)~\mbf{then}~\setof{w}~\mbf{else}~\emptyset
\qed
\end{array}\]
\end{definition}
$\mathcal{L}'$ eliminates the matched prefix and enumerates the possible suffixes.
The positive and negative lookaheads examine whether the remaining suffix matches the specified expression.
Note that either the REwLA expression or the target string grows smaller during the evaluation of $\mathcal{L}'$.

The definition immediately provides an algorithm to calculate REwLA matching; however, it is impractical because it enumerates all the possibilities and examines the remaining suffixes for every lookahead.

\subsection{Weighted Regular Expression}
Weighted regular expressions \cite{DrKV09} are useful for various operations on strings.

\newcommand{\rzero}{\overline{0}}
\newcommand{\rone}{\overline{1}}
Weighted regular expressions are defined via semirings.
\begin{definition}
Semiring $(R,\oplus, \otimes,\rzero,\rone)$ consists of
set $R$,
two operations $\oplus$ and $\otimes$ on $R$,
two elements $\rzero$ and $\rone$ in $R$,
satisfying the following properties.
\begin{itemize}
\item $\oplus$ and $\otimes$ are associative:
$a \oplus (b\oplus c) = (a \oplus b)\oplus c$ and
$a \otimes (b\otimes c) = (a \otimes b) \otimes c$.
\item $\oplus$ is commutative: $a \oplus b = b \oplus a$.
\item $\rzero$ and $\rone$ are the units of $\oplus$ and $\otimes$, respectively:
$a \oplus\rzero = \rzero \oplus a = a$ and
$a \otimes \rone = \rone \otimes a = a$.
\item $\otimes$ distributes over $\oplus$: namely, $a \otimes (b\oplus c) = (a \otimes b)\oplus (a \otimes c)$ and $(b\oplus c) \otimes a = (b \otimes a)\oplus (c \otimes a)$.
\item $\rzero$ is the annihilator of $\otimes$: namely,
$a \otimes \rzero = \rzero \otimes a = \rzero$. \qed
\end{itemize}
\end{definition}
In weighted regular expressions, we commonly impose additional properties on the semiring.
This paper assumes that $\oplus$ is idempotent, namely $a \oplus a = a$.

A weighted regular expression over semiring $(R,\oplus, \otimes,\rzero,\rone)$
calculates a value in $R$ for every string. Specifically,
$\otimes$ and $\oplus$ are used corresponding to the string concatenation and multiple matching possibilities, respectively.

For studying weighted REwLA, we need to define the computation corresponding to the lookahead.
This paper takes the simplest choice: lookaheads compute nothing regarding the semiring.
This definition can be read as saying that the lookahead does not fetch any character, or that we study weighted regular expressions that may contain lookaheads.

Now we define the syntax of weighted REwLA $\check e$ as follows.
The new syntactic construct $k \in R$ expresses the operation on the weight.
\[\begin{array}{@{}l@{~\,\!}c@{~\,}l@{}}
\check e &::=&
\emps
~\,{\mid}~\,\sigma
~\,{\mid}~\,k
~\,{\mid}~\,\check e \mid \check e
~\,{\mid}~\,\check e \cdot \check e
~\,{\mid}~\,\check e{*}
~\,{\mid}~\,\pla{e}
~\,{\mid}~\,\nla{e}
\end{array}\]
Note that the lookahead is defined by a REwLA.
In what follows, we may write $k\otimes \check e$ and $\check e \otimes k$
instead of $k\cdot \check e$ and $\check e \cdot k$, respectively, to make
$k \in R$ explicit.

Next, we define the value of weighted REwLA matching.
\begin{definition}
The value of weighted REwLA $\check e$ on semiring
$(R,\oplus, \otimes,\rzero,\rone)$ against string $w$, denoted by
$\mathcal{V}(\check e,w)$, is defined as follows.
\[\begin{array}{@{}lcl@{}}
\lefteqn{\textstyle\mathcal{V}(\check e,w)
= \bigoplus\setof{v \mid (\emps,v) \in \mathcal{V}'(\check e,(w, \rone))}}\\
\mathcal{V}'({\emps},\check w) &=& \setof{\check w}\\
\mathcal{V}'(\sigma,(\emps,v))
&=& \emptyset\\
\mathcal{V}'(\sigma,(\sigma' w,v))
&=& \mbf{if}~\sigma=\sigma' ~\mbf{then}~\setof{(w,v)}~\mbf{else}~\emptyset\\
\mathcal{V}'(k,(w,v))
&=& \setof{(w,v \otimes k)}\\
\mathcal{V}'(\check e \mid \check e',\check w) &=& \mathcal{V}'(\check e,\check w) \cup \mathcal{V}'(\check e',\check w)\\
\mathcal{V}'(\check e \cdot \check e',\check w) &=&
\scmp{\check y}{\check x \in \!\mathcal{V}'(\check e,\check w) \wedge \check y \!\in \mathcal{V}'(\check e',\check x)}\\
\mathcal{V}'(\check e{*},\check w) &=&
\setof{\check w} \cup
\scmp{\check y}{\check x \in \mathcal{V}'(\check e,\check w) \wedge \check x \neq \check w \wedge \check y \in \mathcal{V}'(\check e{*},\check x)}\\
\mathcal{V}'(\pla{e},(w,v)) &=&
\mbf{if}~w \in \mathcal{L}(e)~\mbf{then}~\setof{(w,v)}~\mbf{else}~\emptyset\\
\mathcal{V}'(\nla{e},(w,v)) &=&
\mbf{if}~w \not\in \mathcal{L}(e)~\mbf{then}~\setof{(w,v)}~\mbf{else}~\emptyset\qed
\end{array}\]
\end{definition}
The value is undefined if the weighted REwLA contains the repetition consisting of $\emps$ and $k \neq \rone$, such as $k\rept$. We disregard such pathological cases.

\subsection{Submatch Extraction by Weighted Regular Expression}
Weighted regular expressions are useful for various purposes. Here, we consider
submatch extractions  by weighted regular expressions.

To the author's knowledge, Fischer et al. \cite{FiHW10} is the pioneer in using weighted regular expressions for regular expression matching. They expressed matching policies in regular expression matching, such as the longest matching, using weighted regular expressions. However, they did not consider submatching.
The following discussion reformulates the study by Laurikari \cite{Laur01}, which used  finite automata with tagged transitions, from the perspective of weighted regular expressions.

Let $e^\dagger$ be the input regular expression. $e_1 \in e_2$ denotes that $e_1$ is a subexpression of $e_2$.
For simplicity, we assume that all subexpressions in $e^\dagger$ are different.

The semiring considered here is the one over the lists containing elements in $\setof{\ms{L}_e \mid e \in e^\dagger} \cup \setof{\ms{R}_e \mid e \in e^\dagger} \cup \setof{\ms{C}_\sigma \mid \sigma \in \Sigma}$.
Consider a weighted regular expression obtained by replacing
every $\sigma \in e^\dagger$ with $(\sigma \otimes [\ms{C}_\sigma])$
and every submatch-required subexpression $e \in e^\dagger$ with $([\ms{L}_e] \otimes e \otimes [\ms{R}_e])$.
$[\ms{C}_\sigma]$, $[\ms{L}_e]$, and $[\ms{R}_e]$ respectively express
the consumption of $\sigma$ in a transition, and the start and end of the submatching for $e$.
The multiplication operator is the list concatenation, which results in the trace of the matching process including the start and end of the submatching.  The addition operator should pick the desired trace.
Several policies (such as the longest) may exist, but here, we only require that the operators must satisfy the semiring property.

The following sequence may be obtained for string $\sigma_1\cdots \sigma_n$.
\[
[\ms{C}_{\sigma_1},\ldots,\ms{C}_{\sigma_i},\ms{L}_e,\ms{C}_{\sigma_{i\!{+}\!1}},\ldots,\ms{C}_{\sigma_j},\ms{R}_e,\ms{C}_{\sigma_{j\!{+}\!1}},\ldots,\ms{C}_{\sigma_n}]
\]
Note that subsequence $[\ms{C}_{\sigma_{i+1}},\ldots,\ms{C}_{\sigma_j}]$ must not contain $\ms{R}_e$.
This sequence expresses that the submatching for $e$ starts when processing string ${\sigma_1}\cdots{\sigma_i}$ and ends when processing ${\sigma_{i+1}}\cdots{\sigma_j}$; the following ${\sigma_{j+1}}\cdots{\sigma_n}$ is scanned afterwards.
Hence, $\sigma_{i+1}\cdots \sigma_j$ is the submatching for $e$.
In general, the matching result may contain more than one $\ms{L}_e$ and $\ms{R}_e$;
then, every subsequence between the corresponding $\ms{L}_e$ and $\ms{R}_e$ is the submatching for $e$.

In the remaining discussions, we just consider semiring $(R,\oplus, \otimes,\rzero,\rone)$ without fixing a particular matching semantics.

\subsection{Weighted Finite Automata}
We will construct finite automata from REwLA. First, we define weighted NFA.
\begin{definition}
{\em Weighted NFA} $(S,\Sigma,\tau,I,F)$
over semiring $(R,\oplus, \otimes,\rzero,\rone)$ is a five tuple.
$S$ is a finite set of states. $\Sigma$ is the finite alphabet.
$\tau : (S \times \Sigma \times S) \to R$ is the transition function.
$I \subseteq S$ and $F \subseteq S$ are the sets of initial states and final states, respectively.

The {\em state transition} $\typeof{\rho_\mathcal{A}}{\Sigma^* \to 2^{S\times R}}$ of
$\mathcal{A} = (S,\Sigma,\tau,I,F)$ is defined below.
\[\begin{array}{@{}lcl@{}}
  \rho_\mathcal{A}(\emps) &=& \{(s_0,\rone) \mid s_0 \in I \}\\
  \rho_\mathcal{A}(w \cdot \sigma) &=&
\{ (s',r \otimes r') \mid  (s,r) \in \rho_\mathcal{A}(w) \wedge {}
((s,\sigma,s'),r') \in \tau\}
\end{array}\]

The value $\mathcal{V}(\mathcal{A},w)$ of $\mathcal{A}$ for string $w \in \Sigma^*$
is defined below.
\[
\mathcal{V}(\mathcal{A},w) =
\bigoplus\setof{r \mid (s,r) \in \rho_\mathcal{A}(w) \wedge s \in F}
\]

Weighted NFA $\mathcal{A}$ {\em accepts} string $w \in \Sigma^{*}$ if and only if $\mathcal{V}(\mathcal{A},w) \neq \rzero$. The set of strings accepted by $\mathcal{A}$ is denoted by $\mathcal{L}(\mathcal{A})$. \qed
\end{definition}

Weighted NFA is called NFA if its transition function $\tau$ returns only $\rone$ or $\rzero$.
NFA is called DFA if its initial state is unique and the transition function $\tau$ is deterministic,
i.e., $((s,a,s_1),r_1) \in \tau \wedge ((s,a,s_2),r_2)\in \tau$ implies $s_1 = s_2$ and $r_1 = r_2$.

Weighted NFA / DFA enables efficient matching.
\begin{theorem}\label{theorem:nfa}
Given weighted NFA $\mathcal{A}$ on semiring $(R,\oplus, \otimes,\rzero,\rone)$,
its value for string $w$ is obtained in $\mr{O}((m+k)n)$ time and  $\mr{O}(m)$ space,
where $n$ is the length of $w$, $m$ is the number of states of $\mathcal{A}$, and $k$ is the size of the transition function of $\mathcal{A}$, if every element in $R$ is constant-size and both $\otimes$ and $\oplus$ are constant time and space.
\end{theorem}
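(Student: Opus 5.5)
The approach is a forward dynamic program that replaces the set $\rho_\mathcal{A}(w)$ by a single vector indexed by states. For a prefix $w$ of the input, define $\typeof{\mu_w}{S \to R}$ by
\[\textstyle \mu_w(s) = \bigoplus\setof{r \mid (s,r) \in \rho_\mathcal{A}(w)}.\]
The empty sum is $\rzero$. The set $\rho_\mathcal{A}(w)$ may grow exponentially in $|w|$, but $\mu_w$ has only $m$ entries, each constant-size by assumption. Storing $\mu_w$ for the current prefix therefore takes $\mr{O}(m)$ space.

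The key step is a recurrence for $\mu$. Initially $\mu_\emps(s) = \rone$ if $s \in I$ and $\rzero$ otherwise. For the step, I would show
\[\textstyle \mu_{w\sigma}(s') = \bigoplus\setof{\mu_w(s) \otimes r' \mid ((s,\sigma,s'),r') \in \tau}.\]
This follows from the definition of $\rho_\mathcal{A}(w\cdot\sigma)$ together with right distributivity, $(\bigoplus_i r_i)\otimes r' = \bigoplus_i (r_i \otimes r')$, and associativity and commutativity of $\oplus$, which let us regroup the big sum by the predecessor state $s$.

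The one delicate point is that the paper defines $\rho_\mathcal{A}$ as a \emph{set} of pairs. Duplicate pairs are therefore merged, whereas the recurrence sums over all derivations. Idempotence of $\oplus$, which the paper assumes, makes these agree: merging duplicates changes nothing when $a \oplus a = a$. A second subtlety is that the set-builder on $(s',r\otimes r')$ may identify contributions from different $s$, and idempotence handles this too. I expect this set-versus-multiset bookkeeping to be the main thing to state carefully. The rest is a routine induction on $|w|$ establishing that the recurrence computes $\mu_w$ as defined.

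Finally, $\mathcal{V}(\mathcal{A},w) = \bigoplus_{s\in F} \mu_w(s)$, again by regrouping. For the cost, each input character requires resetting a fresh vector to $\rzero$, which takes $\mr{O}(m)$ time. It then requires one pass over the transitions labelled $\sigma$, each doing one $\otimes$ and one $\oplus$, which takes $\mr{O}(k)$ time. This gives $\mr{O}(m+k)$ per character and $\mr{O}((m+k)n)$ in total, plus $\mr{O}(m)$ for initialization and the final sum. Only two vectors of size $m$ are kept, the current and the next, so the working space is $\mr{O}(m)$. Here the transition table is counted as input and not as working space.
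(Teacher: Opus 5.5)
Your proposal is correct and follows essentially the same route as the paper. The paper likewise defines the per-state aggregate $\rho_{s_i}(w)$ by exactly your recurrence, proves $\bigoplus\setof{r \mid (s_i,r)\in\rho_\mathcal{A}(w)} = \rho_{s_i}(w)$ by induction on $|w|$ using distributivity and idempotence of $\oplus$ (via $\bigoplus\setof{r\otimes r' \mid r\in V} = (\bigoplus V)\otimes r'$), and evaluates it by repeatedly updating an $m$-entry vector; your explicit treatment of the set-versus-multiset merging and of the $\mr{O}(m+k)$ per-character cost simply spells out what the paper calls ``the standard implementation.''
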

\begin{proof}\normalfont
Let $\mathcal{A}=(\setof{s_1,\ldots,s_m},\Sigma,\tau,I,F)$.
Consider function $\rho_{s_i}$ defined below.
\[\begin{array}{@{}lcl@{}}
  \rho_{s_i}(\emps) &=& \mbf{if}~{s_i}\in I ~\mbf{then}~\rone~\mbf{else}~\rzero\\
  \rho_{s_i}(w \cdot \sigma) &=&
\bigoplus\setof{\rho_{s_j}(w) \otimes v \mid ((s_j,\sigma,s_i),v) \in \tau}
\end{array}\]
Function $\rho_{s_i}$ can be evaluated by
the standard implementation that repeats updating the vector $(\rho_{s_1},\ldots,\rho_{s_k})$
under the stated computational complexity.
Accordingly, proving the following equation suffices.
\[
\forall s_i.~\forall w \in \Sigma^{*}.~
\bigoplus \setof{r \mid (s_i,r) \in \rho_\mathcal{A}(w)} = \rho_{s_i}(w)
\]
An induction on the length of $w$ proves this equation.
Note that the distributivity and idempotency of $\oplus$ imply
$\bigoplus \setof{r \otimes r' \mid r \in V} = (\bigoplus V) \otimes r'$.
\qed
\end{proof}

\begin{theorem}\label{theorem:dfa}
Given weighted DFA $\mathcal{A}$ on semiring $(R,\oplus, \otimes,\rzero,\rone)$,
its value for string $w$ is obtained in $\mr{O}(m+n)$ time and  $\mr{O}(m)$ space,
where $n$ is the length of $w$ and $m$ is the number of states of $\mathcal{A}$, if every element in $R$ is constant-size and both $\otimes$ and $\oplus$ are constant time and space.
\end{theorem}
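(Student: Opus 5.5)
Since a weighted DFA has a unique initial state and its transition function is deterministic, the set $\rho_\mathcal{A}(w)$ has at most one element for every $w$. So, instead of maintaining a vector of values for all $m$ states as in the proof of Theorem~\ref{theorem:nfa}, I maintain a single pair: the current state and the weight accumulated so far. Each input character is processed in constant time, which removes the factor $m$ (and the dependence on $k$) from the running time.

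\textbf{Step 1: precompute a lookup table.} I store $\tau$ as a table $\delta$ indexed by (state, character). The entry $\delta(s,\sigma)$ holds the unique pair $(s',r')$ with $((s,\sigma,s'),r')\in\tau$ if one exists, and is marked undefined otherwise. Since $\Sigma$ is a fixed finite alphabet, the table has $\mr{O}(m|\Sigma|)=\mr{O}(m)$ entries. I also store a bit marking whether each state is final. Both can be built in $\mr{O}(m)$ time and space, which accounts for the additive $m$ in the time bound. I take as given that the automaton is presented in a form where building this table costs $\mr{O}(m)$; determinism bounds the size of $\tau$ by $m|\Sigma|$, so this is consistent.

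\textbf{Step 2: scan the input.} I start with the pair $(s_0,\rone)$, where $s_0$ is the unique initial state. For each character $\sigma$ of $w$, I look up $\delta(s,\sigma)=(s',r')$ and update the pair to $(s',r\otimes r')$. If the entry is undefined, the run is dead. After the scan, I output $r$ if the current state is final and the run is alive, and $\rzero$ otherwise. Each step costs one table lookup plus one $\otimes$, so the scan takes $\mr{O}(n)$ time and only constant extra space beyond the table.

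\textbf{Step 3: correctness.} I prove by induction on $|w|$ that $\rho_\mathcal{A}(w)$ is either empty or equal to the singleton $\{(s,r)\}$ holding the pair the algorithm maintains, with $\rho_\mathcal{A}(w)=\emptyset$ exactly when the run has died. The base case holds because $I=\{s_0\}$. The inductive step follows from the definition of $\rho_\mathcal{A}(w\cdot\sigma)$ together with determinism: there is at most one tuple $((s,\sigma,s'),r')\in\tau$ for the current state $s$. Given this invariant, the set $\{r\mid (s,r)\in\rho_\mathcal{A}(w)\wedge s\in F\}$ in the definition of $\mathcal{V}(\mathcal{A},w)$ has at most one element. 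Its $\bigoplus$ is therefore that element, or $\rzero$ if the set is empty, which is exactly what the algorithm returns.

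The only delicate point is the constant-time transition lookup, which requires the array representation of $\tau$ from Step~1. Everything else is routine.
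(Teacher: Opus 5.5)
Your proposal is correct and is essentially the paper's argument: the paper's proof just says to evaluate $\rho_\mathcal{A}$ directly from its definition, which for an automaton with a unique initial state and deterministic transitions means carrying a single (state, weight) pair through the input, exactly as you do. Your lookup table and your invariant that $\rho_\mathcal{A}(w)$ has at most one element make explicit the representation and correctness details that the paper leaves implicit.
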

\begin{proof}\normalfont
Evaluating $\rho_\mathcal{A}$ exactly as its definition suffices.\qed
\end{proof}
Note that, however, unlike unweighted NFAs, weighted NFAs cannot be determinized or minimized in general. See the literature \cite{DrKV09} for details.

For simplicity of the presentation, we may use $\emps$ transitions, expressed by $((s,\emps,s'),r)$ element in
the transition function $\tau$.
An $\emps$ transition can alter the state without fetching the next character.
We assume that any loop consisting of $\emps$ transitions contains no $((s,\emps,s'),r)$ transition where $r \neq \rone$. This assumption holds for every weighted NFA studied in this paper.

\subsection{Boolean Finite State Automaton}
Boolean Finite State Automaton (BFA)~\cite{BrLe80} (also known as alternating finite automaton~\cite{ChKS81}) is a generalization of DFA and NFA.
Given a string, DFA calculates a state and accepts the string if the state is a final state.
NFA can be regarded as calculating a set of states and checking whether it contains a final state. This view is equivalent to regarding each state as a propositional variable, each of which is assigned True if it is a final state, and calculating their logical disjunction.
BFA may output any propositional formula containing negations and conjunctions.

\begin{definition}
{\em Boolean finite automaton} $(S,\Sigma,\tau,f,F)$ is a five-tuple.
$S$ is a finite set of states.
$\Sigma$ is a finite alphabet.
$\typeof{\tau}{(S \times \Sigma) \to \mathcal{F}_S}$ is a transition function.
$f \in \mathcal{F}_S$ is the initial expression. $F \subseteq S$ is the set of final states.

The {\em state transition} of $\mathcal{A} = (S,\Sigma,\tau,f,F)$, denoted by $\rho_\mathcal{A} : \Sigma^* \to \mathcal{F}_S$, is defined below.
\[\begin{array}{@{}lcl@{}}
  \rho_\mathcal{A}(\emps) &=& f\\
  \rho_\mathcal{A}(w \cdot \sigma) &=& \rho_\mathcal{A}(w)[\tau(s,\sigma)/s]_{s\in S}
\end{array}\]

$\mathcal{A}$ {\em accepts} $w \in \Sigma^*$ if and only if $F \vdash \rho_\mathcal{A}(w)$.
The set of strings accepted by $\mathcal{A}$ is denoted by $\mathcal{L}(\mathcal{A})$.
\qed
\end{definition}
For simplicity, if no $f'$ satisfies $((s,\sigma),f') \in \tau$, we let $\tau(s,\sigma) = \mi{False}$.

Every BFA accepts a regular language.
\begin{theorem}[\fitcite\cite{BrLe80,ChKS81}]\label{theorem:bfa}
For any BFA $\mathcal{A} = (S,\Sigma,\tau,f,F)$,
there exists DFA $\mathcal{B}$ such that
$\mathcal{L}(\mathcal{A}) = \mathcal{L}(\mathcal{B})$
and its number of states is $\mr{O}(2^{2^{|S|}})$.
\end{theorem}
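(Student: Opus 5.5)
The plan is to determinize by letting the DFA track the current formula $\rho_\mathcal{A}(w)$ up to logical equivalence. Concretely, I will take the set of DFA states to be $\mathcal{F}_S/{\equiv}$, the propositional formulas over $S$ modulo equivalence. By the remark in the preliminaries, this set has at most $2^{2^{|S|}}$ elements. The construction is as follows.
\begin{itemize}
\item The initial state is the class $[f]$.
\item The transition on $\sigma$ sends $[g]$ to $[g[\tau(s,\sigma)/s]_{s\in S}]$.
\item The final states are those classes $[g]$ with $F \vdash g$.
\end{itemize}
By construction this DFA has a unique initial state and a deterministic transition function. Every transition carries weight $\rone$, so it is a DFA in the paper's sense.

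The first step is to check that the construction is well defined on equivalence classes. Substitution respects equivalence: if $g \equiv g'$, then $g[\tau(s,\sigma)/s] \equiv g'[\tau(s,\sigma)/s]$. The reason is that, under any assignment $T$, the truth value of $g[\vec h/\vec s]$ equals the truth value of $g$ under the derived assignment $\{s \mid T \vdash \tau(s,\sigma)\}$. The acceptance condition $F\vdash g$ likewise depends only on the truth table of $g$. So both the transition map and the set of final states are well defined on classes.

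The second step is to show, by induction on $w$, that the DFA state reached on reading $w$ is $[\rho_\mathcal{A}(w)]$. The base case holds by the choice of initial state. The step case follows directly from the definition $\rho_\mathcal{A}(w\sigma)=\rho_\mathcal{A}(w)[\tau(s,\sigma)/s]_{s\in S}$. Hence the DFA accepts $w$ iff $F\vdash\rho_\mathcal{A}(w)$, which is exactly the condition for $\mathcal{A}$ to accept $w$. This gives $\mathcal{L}(\mathcal{A})=\mathcal{L}(\mathcal{B})$.

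The only delicate point is the well-definedness lemma for substitution. A simple way to sidestep it is to use truth tables as states instead of formula classes: a state is a Boolean function $2^S\to 2$. Under this choice the transition on $\sigma$ is precomposition with the map $T\mapsto\{s\mid T\vdash\tau(s,\sigma)\}$, which makes well-definedness immediate. The count of such functions is $2^{2^{|S|}}$, which gives the stated $\mr{O}(2^{2^{|S|}})$ bound.
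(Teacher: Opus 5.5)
Your proof is correct and uses the same construction as the paper. The paper's one-line proof takes the formulas in $\mathcal{F}_S$ as the states, with $g \mapsto g[\tau(s,\sigma)/s]_{s\in S}$ as the transition on $\sigma$, and notes that the result is deterministic; your quotient by equivalence (or the truth-table version), the well-definedness check, and the induction showing the state after $w$ is $[\rho_\mathcal{A}(w)]$ fill in details the paper leaves implicit, including the finiteness its literal wording ("every element of $\mathcal{F}_S$") glosses over.
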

\begin{proof}\normalfont
Construct the NFA whose states consist of every element of $\mathcal{F}_S$. Note that this NFA is a DFA. \qed
\end{proof}

The algorithm to convert BFAs to DFAs is beyond the scope of this paper. We disregard its computational cost. 

BFA supports many operations. Here, we introduce the negation and product.
The negation BFA accepts strings that are not accepted by the original BFA.
The product accepts strings that are accepted by both original BFAs.
\begin{definition}
Given BFA $\mathcal{A} = (S,\Sigma,\tau,f,F)$, its {\em negation} is $\neg\mathcal{A} = (S,\Sigma,\tau,\neg f,F)$. \qed
\end{definition}
\begin{definition}
Given two BFAs $\mathcal{A}_1 = (S_1,\Sigma,\tau_1,f_1,F_1)$
and $\mathcal{A}_2 = (S_2,\Sigma,\tau_2,f_2,F_2)$, their {\em product} is $\mathcal{A}_1 \times \mathcal{A}_2 = (S_1 \cup S_2,\Sigma,\tau_1 \cup \tau_2,f_1 \wedge f_2,F_1 \cup F_2)$.\qed
\end{definition}

\section{Translation of Regular Expressions with Lookaheads into Deterministic Finite Automata}\label{section:DFA}
First, we formulate a method to convert an unweighted REwLA into an automaton by extending the classical NFA construction method known as the Thompson construction \cite{Thom68}.

\begin{figure}
\centering
\includegraphics{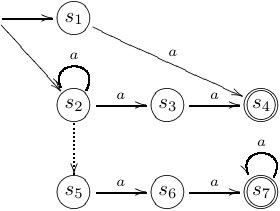}
\caption{The BFA corresponding to $(\pla{aaa{*}}a)\rept a$.
The dotted arrows represent lookaheads.
The initial expression is $s_1 \vee (s_2 \wedge s_5)$, and $\tau(s_2,a) = (s_2 \wedge s_5) \vee s_3$.
The final states are $s_4$ and $s_7$.
}\label{figure:bfa-ex}
\end{figure}

As an example, consider $(\pla{aaa{*}}a)\rept a$.
The automaton corresponding to it
should be something like the one in Figure \ref{figure:bfa-ex}.
Note that states $s_5$ through $s_7$ correspond to the lookahead.
The initial state is $s_1$ or $s_2$. As usual, we would check whether a transition from either of them reaches a final state. Since a lookahead is imposed at state $s_2$, the transitions from $s_2$ and $s_5$  both (i.e., conjunctively) must reach the corresponding final state.

The example indicates that the automaton corresponding to a REwLA can involve dependencies consisting of disjunction, conjunction, and negation. We handle these dependencies using BFA.
Below, we give the conversion $\mi{toBFA}$ from REwLA to BFA.
During this construction, the state names must be suitably renamed so that they are mutually distinct.
In addition, we use the following auxiliary function $\theta^f_F$.
\[\theta^f_F(t) = t[s\vee f/s]_{s\in F}\]
Intuitively, applying $\theta^f_F$ corresponds to adding an $\epsilon$ edge from each state in $F$ to the states expressed by the formula $f$.

\[
\newcommand{\mul}[1]{\multicolumn{3}{@{}l@{}}{#1}}
\begin{array}{@{}lcl@{}}
\lefteqn{\begin{array}{@{}lcl@{}}
\mi{toBFA}(e) &=& (S,\Sigma,\tau,f,F \cup L)
\qquad\mbf{where}~(S,\Sigma,\tau,f,F,L) = \mathcal{T}_B(e)
\end{array}}
\smallskip\\
\mathcal{T}_B(\emps) &=&
(\setof{s},\Sigma,\emptyset,s, \setof{s} ,\emptyset)\\
\mathcal{T}_B(\sigma) &=& (\setof{s,t}, \Sigma,\setof{((s,\sigma),t)},s,\setof{t},\emptyset)
\\
\mathcal{T}_B(e\mid e') &=& (S \cup S', \Sigma, \tau \cup \tau', f \vee f',F\cup F',L \cup L')
\\
\mul{\qquad\begin{array}{@{}ll@{}}
\mbf{where}&(S,\Sigma,\tau,f,F,L) = \mathcal{T}_B(e)\\
&(S',\Sigma,\tau',f',F', L') = \mathcal{T}_B(e')\\
\end{array}}
\smallskip\\
\mathcal{T}_B(e \cdot e') &=& (S \cup S',\Sigma,\tau'',\theta^{f'}_F(f), F', L\cup L')\\
\mul{\qquad\begin{array}{@{}ll@{}}
\mbf{where}&(S,\Sigma,\tau,f,F,L) = \mathcal{T}_B(e)\\
&(S',\Sigma,\tau',f',F', L') = \mathcal{T}_B(e')\\
&\tau''=\setof{((s,\sigma),\theta^{f'}_F(t)) \mid ((s,\sigma),t) \in \tau} \cup \tau'\\
\end{array}}
\smallskip\\
\mathcal{T}_B(e{*}) &=& (S\cup \setof{s_0}, \Sigma,\tau',s_0 \vee f, F\cup \setof{s_0}, L)
\\
\mul{\qquad\begin{array}{@{}ll@{}}
\mbf{where}
&(S,\Sigma,\tau,f,F,L) = \mathcal{T}_B(e)\\
&\tau' = \setof{((s,\sigma),\theta^{f}_F(t)) \mid ((s,\sigma),t) \in \tau}\\
\end{array}}
\smallskip\\
\mathcal{T}_B(\pla{e}) &=& (S\cup\setof{s}, \Sigma, \tau, s \wedge f, \setof{s}, F\cup L)
\\
\mathcal{T}_B(\nla{e}) &=&
(S\cup\setof{s}, \Sigma, \tau, s\wedge\neg f, \setof{s}, F\cup L)
\end{array}
\]

The function $\mathcal{T}_B$ constructs the BFA by a process similar to the Thompson construction \cite{Thom68}.
There are two main differences: it outputs a conjunction for lookaheads, and
it manages two kinds of final states separately.
In the previous example, the two kinds of final states correspond to state $s_4$ and state $s_7$, respectively.
State $s_4$ is the final state of the non-lookahead part, and is affected by concatenation of regular expressions.
On the other hand, state $s_7$ is the final state of the lookahead, and is not affected by the following subexpressions.
Figure \ref{figure:bfa} shows the behavior of $\mathcal{T}_B$ for the case of concatenation.

\begin{figure}
\begin{minipage}{\columnwidth}%
\centering
\begin{picture}(205,100)
\put(5,0){\framebox(200,100){}}
\put(105,85){\makebox[0pt][c]{\LARGE $\mathcal{T}_B(e ~{}\cdot{}~ e')$}}

\put(5,-3){
\put(5,10){\framebox(85,70){}}
\put(-5,25){\vector(1,0){16}}
\put(20,25){\circle{16}}\put(20,23){\makebox[0pt][c]{$f$}}

\put(35,65){\makebox[0pt][c]{\Large $\mathcal{T}_B(e)$}}

\put(-3,0){
\put(35,22){\makebox[0pt][c]{\LARGE $\cdot$}}
\put(40,22){\makebox[0pt][c]{\LARGE $\cdot$}}
\put(45,22){\makebox[0pt][c]{\LARGE $\cdot$}}
\put(50,22){\makebox[0pt][c]{\LARGE $\cdot$}}
\put(55,22){\makebox[0pt][c]{\LARGE $\cdot$}}

\put(35,28){\makebox[0pt][c]{\LARGE $\cdot$}}
\put(40,33){\makebox[0pt][c]{\LARGE $\cdot$}}
\put(45,38){\makebox[0pt][c]{\LARGE $\cdot$}}
\put(50,43){\makebox[0pt][c]{\LARGE $\cdot$}}
\put(55,48){\makebox[0pt][c]{\LARGE $\cdot$}}
}

\put(65,47){\dashbox(20,28){}} \put(78,57){\makebox[0pt][c]{$L$}}
\put(58,55){\vector(1,0){12}}
\put(58,61){\vector(1,0){12}}
\put(58,67){\vector(1,0){12}}

\put(65,15){\dashbox(20,28){}} \put(78,25){\makebox[0pt][c]{$F$}}
\put(58,33){\vector(1,0){12}}
\put(58,25){\vector(1,0){12}}

}

\put(5,-3){
\qbezier(58,33)(95,47)(115,27)
\put(110,32){\vector(1,-1){6}}
\qbezier(58,25)(95,13)(115,23)
\put(110,20.5){\vector(2,1){6}}
}

\put(110,-3){
\put(5,10){\framebox(85,70){}}
\put(20,25){\circle{16}}\put(20,23){\makebox[0pt][c]{$f'$}}

\put(35,65){\makebox[0pt][c]{\Large $\mathcal{T}_B(e')$}}

\put(-3,0){
\put(35,22){\makebox[0pt][c]{\LARGE $\cdot$}}
\put(40,22){\makebox[0pt][c]{\LARGE $\cdot$}}
\put(45,22){\makebox[0pt][c]{\LARGE $\cdot$}}
\put(50,22){\makebox[0pt][c]{\LARGE $\cdot$}}
\put(55,22){\makebox[0pt][c]{\LARGE $\cdot$}}

\put(35,28){\makebox[0pt][c]{\LARGE $\cdot$}}
\put(40,33){\makebox[0pt][c]{\LARGE $\cdot$}}
\put(45,38){\makebox[0pt][c]{\LARGE $\cdot$}}
\put(50,43){\makebox[0pt][c]{\LARGE $\cdot$}}
\put(55,48){\makebox[0pt][c]{\LARGE $\cdot$}}
}

\put(65,47){\dashbox(20,28){}} \put(78,57){\makebox[0pt][c]{$L'$}}
\put(58,55){\vector(1,0){12}}
\put(58,61){\vector(1,0){12}}
\put(58,67){\vector(1,0){12}}

\put(65,15){\dashbox(20,28){}} \put(78,25){\makebox[0pt][c]{$F'$}}
\put(58,35){\vector(1,0){12}}
\put(58,29){\vector(1,0){12}}
\put(58,23){\vector(1,0){12}}
}

\end{picture}%
\end{minipage}
\caption{The process of converting REwLA into BFA (the case of concatenation)}\label{figure:bfa}
\end{figure}

This allows us to convert REwLA into BFA, and in turn into DFA.
\begin{theorem}\label{theorem:la-bfa}
\[\mathcal{L}(e) = \mathcal{L}(\mi{toBFA}(e))\]
\end{theorem}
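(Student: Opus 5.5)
The plan is to prove a stronger, compositional invariant about $\mathcal{T}_B(e)$ by structural induction on $e$, and then obtain Theorem~\ref{theorem:la-bfa} as the special case where the continuation after $e$ is ``end of string''. The first step is to make BFA acceptance local. For a BFA with state set $S$ and final set $G$, define for each state $s$ and string $w$ the truth value $[\![s]\!]_w$ as $G \vdash \rho_s(w)$, where $\rho_s$ is the state transition started from the initial expression $s$. Extend this to a formula $g\in\mathcal{F}_S$ by reading each atom $s$ as $[\![s]\!]_w$. A routine induction on $w$, using that $\rho$ is defined by simultaneous substitution, gives $G\vdash\rho_\mathcal{A}(w)$ iff $[\![f]\!]_w$ holds. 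In other words, a formula ``accepts'' a remaining suffix exactly when its atoms, each read from that suffix, make it true.

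The invariant is about continuations. Let $(S,\Sigma,\tau,f,F,L)=\mathcal{T}_B(e)$, and let $g$ be a formula over fresh states whose truth $[\![g]\!]_x$ at each suffix $x$ is given by some predicate $K(x)$. Apply $\theta^{g}_F$ to $f$ and to all right-hand sides of $\tau$, and take $L$ (together with the final states of the continuation) as final. The claim is that $[\![\theta^g_F(f)]\!]_w$ holds iff there is $x\in\mathcal{L}'(e,w)$ with $K(x)$. I would also record, as part of the induction, that states in $L$ and inside lookahead sub-automata are never substituted by any $\theta$. This holds because $\theta$ only touches $F$, and the lookahead constructors move the inner $F\cup L$ into the $L$-component. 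It matters because the inner lookahead formula $f$, which may sit under $\neg$, is then never modified by later concatenations. Consequently its truth at $w$ is exactly ``$w\in\mathcal{L}(e)$'' by the induction hypothesis applied with $K=\{\emps\}$, that is, with $g$ read so that the final states accept only the empty suffix.

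The cases then go as follows.
\begin{itemize}
\item For $\emps$ and $\sigma$ the claim is direct from the definitions.
\item Alternation uses distributivity of substitution over $\vee$.
\item For concatenation $e\cdot e'$, note that $\theta^{g}_{F'}\circ\theta^{f'}_F$ equals $\theta^{\theta^g_{F'}(f')}_F$, because $F$ and $F'$ are disjoint. So the induction hypothesis for $e'$ with continuation $K$ yields a new continuation predicate $K'(x)\iff\exists y\in\mathcal{L}'(e',x).\,K(y)$, and the induction hypothesis for $e$ with $K'$ closes the case.
\item The lookahead cases use the independence remark above: $s\wedge f$ (or $s\wedge\neg f$) after substituting $s\vee g$ for $s$ is true at $w$ iff $K(w)$ holds and $w\in\mathcal{L}(e)$ (resp.\ $\notin$).
\end{itemize}
Finally, the theorem follows by taking the trivial continuation, where $F$ is final and $K(x)\iff x=\emps$.

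I expect the star case to be the main obstacle. In $\mathcal{T}_B(e{*})$ the substitution $\theta^f_F$ is applied only once, to the transitions but not to the initial expression $f$. Hence a run re-enters $e$ only after consuming a character, which is exactly what the side condition $x\neq w$ in $\mathcal{L}'(e{*},w)$ expresses. My first attempt is an inner induction on $|w|$. Unfold $[\![s_0\vee f]\!]_w$ under continuation $g$, and split according to whether the first iteration of $e$ consumes nothing, which contributes only the $s_0$ or $\emps$ alternative, or consumes at least one character. In the latter case, after the first step the residual formula is $\theta^{s_0\vee f}_F$ applied to an $e$-derivative, and the inner hypothesis applies to the shorter suffix. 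Making this precise requires stating the invariant for derivatives: it should hold not just for $f$ but for every formula $\rho_{f}(u)$ reached after reading a nonempty prefix. I would therefore formulate the induction hypothesis in terms of $[\![\cdot]\!]_w$ of $\tau$-successors, checking that substitution commutes with transition, i.e.\ $\rho(\theta(t))=\theta'(\rho(t))$.
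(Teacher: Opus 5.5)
The invariant you use is sound for $\emps$, $\sigma$, alternation, concatenation and both lookaheads. The star case, which you flag yourself, is a genuine gap, and the repair you sketch does not close it.

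Your hypothesis applies one substitution $\theta^g_F$ uniformly to the initial formula and to all transitions, with $g$ on fresh states. Unfold $\mathcal{T}_B(e{*})$ under a continuation $g$:
\begin{itemize}
\item the initial formula is $s_0\vee g\vee\theta^g_F(f)$;
\item the transitions of the $e$-part become $\theta^{h}_F(\tau)$ with $h=g\vee\theta^g_F(f)$, because $\theta^g_{F\cup\{s_0\}}\circ\theta^f_F=\theta^{h}_F$ on $\tau$ (up to associativity of $\vee$).
\end{itemize}
So the $e$-part needs two different continuations: $g$ for atoms in $F$ already present at time zero, and $h$ for atoms in $F$ produced by a transition. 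Moreover $h$ mentions the states of $\mathcal{T}_B(e)$ itself, so it is not fresh. Your hypothesis for $e$ therefore cannot be invoked here at all.

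The commutation $\rho(\theta(t))=\theta'(\rho(t))$ you propose to check fails at exactly this point. For an atom $t\in F$ you have $\rho(t,\sigma)=\mi{False}$, but $\rho(\theta^g_F(t),\sigma)=\rho(g,\sigma)$. Atoms in $F$ present before a step carry a continuation that the step must itself advance, and no substitution applied after the step can reproduce this. Passing to derivatives does not remove the non-freshness either.

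The missing idea is to separate the two continuations and allow the transition continuation to be self-referential. This is safe because it is only ever consulted at strictly shorter suffixes. Concretely, take any BFA $\mathcal{C}$ containing $S$ such that:
\begin{itemize}
\item the states of $S$ have transitions $\theta^h_F(\tau)$ for some formula $h$ over the states of $\mathcal{C}$, possibly including $S$;
\item $L$ is final and $S\setminus L$ is non-final.
\end{itemize}
Then for every formula $g_0$ the invariant to prove is
\[
[\![\theta^{g_0}_F(f)]\!]_w \iff \exists x\in\mathcal{L}'(e,w).\,(x=w\wedge[\![g_0]\!]_w)\vee(x\neq w\wedge[\![h]\!]_x).
\]
With this statement the cases go through:
\begin{itemize}
\item Concatenation hands $e$ the pair $(\theta^{g_0}_{F'}(f'),\theta^{h}_{F'}(f'))$, and uses $e'$ with the pairs $(g_0,h)$ and $(h,h)$.
\item Star hands $e$ the pair $(g_0,h')$ with $h'=h\vee\theta^h_F(f)$.
\item The hypothesis for $e$ with the pair $(h,h')$ then yields $[\![h']\!]_x\iff[\![h]\!]_x\vee\exists y\in\mathcal{L}'(e,x).\,y\neq x\wedge[\![h']\!]_y$.
\item Your inner induction on $|x|$, using that $\mathcal{L}'(e,x)$ consists of suffixes of $x$, turns this into $\exists z\in\mathcal{L}'(e{*},x).\,[\![h]\!]_z$, which closes the case.
\end{itemize}

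Alternatively, you can keep a single fresh continuation but choose it per $w$, so that it agrees with $g$ at $w$ and with $h$ at proper suffixes. You would then need a lemma that the truth of $e$'s states at proper suffixes of $w$ depends only on the continuation there. Without one of these two ingredients the star step does not go through.
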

\begin{proof}\normalfont
For $(S, \Sigma, \tau, f, F, L) = \mathcal{T}_B(e)$,
consider function $\tilde\rho$ defined below:
\[
\begin{array}{lcl}
\tilde\rho(f',\emps) &=& f'\\
\tilde\rho(f',w\cdot \sigma) &=& \tilde\rho(f',w)[\tau(s,\sigma)/s]_{s\in S}
\end{array}
\]
Using $\tilde\rho$, we can express the situation that ``with $w_1$ the non-lookahead part reaches a final state, and the lookahead also succeeds using the remaining input $w_2$'' as follows.
\[
L \vdash \tilde\rho(\tilde\rho(f,w_1)[\mi{True}/s]_{s\in F},w_2)
\]
This expression holds if and only if $w_2 \in \mathcal{L}'(e,w_1\cdot w_2)$.
This fact can be easily verified by induction on the structure of REwLA $e$.
Therefore, $L \vdash \tilde\rho(f,w)[\mi{True}/s]_{s\in F}$
holds if and only if $\emps \in \mathcal{L}'(e,w)$.
\qed
\end{proof}
\begin{corollary}
For REwLA $e$, $\mathcal{L}(e)$ is a regular language.\qed
\end{corollary}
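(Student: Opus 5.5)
The corollary follows by chaining two results already established. By Theorem~\ref{theorem:la-bfa}, for every REwLA $e$ the BFA $\mi{toBFA}(e)$ satisfies $\mathcal{L}(e) = \mathcal{L}(\mi{toBFA}(e))$. By Theorem~\ref{theorem:bfa}, every BFA $\mathcal{A}$ has an equivalent DFA $\mathcal{B}$ with $\mathcal{L}(\mathcal{A}) = \mathcal{L}(\mathcal{B})$. Composing the two gives a DFA $\mathcal{B}$ with $\mathcal{L}(e) = \mathcal{L}(\mathcal{B})$. Since the languages accepted by DFAs are exactly the regular languages, $\mathcal{L}(e)$ is regular.

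The one point to check is that $\mi{toBFA}(e)$ really is a BFA in the sense of the definition, namely that its state set is finite. I would verify this by structural induction on $e$. Each clause of $\mathcal{T}_B$ does one of two things: it adds at most two fresh states (for $\emps$, $\sigma$, ${*}$, $\pla{\cdot}$, $\nla{\cdot}$), or it takes the union of the state sets of the two recursive results (for $\mid$ and $\cdot$).

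The same induction shows two further things. First, $|S| = \mr{O}(m)$. Second, every transition formula and the initial formula are propositional formulas over $S$, because $\theta^{f}_F$ only substitutes formulas for variables. Plugging $|S| = \mr{O}(m)$ into Theorem~\ref{theorem:bfa} also recovers the $\mr{O}(2^{2^m})$ bound on the number of DFA states stated in the introduction.

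There is no genuine obstacle. All the substantive work, the correctness of $\mathcal{T}_B$ with respect to $\mathcal{L}'$, is already contained in Theorem~\ref{theorem:la-bfa}, and the corollary is just its composition with the BFA-to-DFA conversion.
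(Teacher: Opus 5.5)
Your proposal is correct and is the paper's argument: the corollary carries a bare \qed right after Theorem~\ref{theorem:la-bfa}, and is meant as the composition of that theorem with the BFA-to-DFA conversion of Theorem~\ref{theorem:bfa}. Your extra check that $\mi{toBFA}(e)$ has finitely many states is harmless. Strictly, $|S| = \mr{O}(m)$ gives only $2^{2^{\mr{O}(m)}}$ DFA states, but that side remark plays no role in regularity.
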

\begin{corollary}\label{corollary:dfa-size}
For a REwLA $e$ of length $m$, we can construct a DFA $\mathcal{A}$ with $\mr{O}(2^{2^m})$ states such that $\mathcal{L}(e) = \mathcal{L}(\mathcal{A})$.
\end{corollary}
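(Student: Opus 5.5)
The corollary follows by composing Theorem~\ref{theorem:la-bfa} with Theorem~\ref{theorem:bfa}, so the only real content is a size bound. Concretely, I will show that for a REwLA $e$ of length $m$, the BFA $\mi{toBFA}(e)$ has $\mr{O}(m)$ states, in fact at most $2m$. Then Theorem~\ref{theorem:bfa} gives a DFA with $\mr{O}(2^{2^{\mr{O}(m)}})$ states, which I will take as the intended reading of $\mr{O}(2^{2^m})$.

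The state count is proved by structural induction on $e$, using the definition of $\mathcal{T}_B$. Let $|S(e)|$ be the number of states of $\mathcal{T}_B(e)$.
\begin{itemize}
\item Base cases: $|S(\emps)|=1$ and $|S(\sigma)|=2$.
\item Alternation and concatenation take the disjoint union of the two state sets (the states are renamed apart), so $|S(e\mid e')| = |S(e\cdot e')| = |S(e)|+|S(e')|$.
\item The constructs $e{*}$, $\pla{e}$ and $\nla{e}$ each add exactly one fresh state.
\end{itemize}
Each constructor also contributes at least one symbol to the length of the expression. Hence $|S(e)|\le 2m$: every leaf costs at most $2$ per symbol, and every unary or binary node costs at most $1$ per symbol. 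Note that $\theta^f_F$ only rewrites formulas and creates no states, and the final set $F\cup L$ in $\mi{toBFA}$ is a subset of $S$.

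With the bound in hand, Theorem~\ref{theorem:la-bfa} gives $\mathcal{L}(e)=\mathcal{L}(\mi{toBFA}(e))$. Theorem~\ref{theorem:bfa} then yields a DFA $\mathcal{A}$ with $\mathcal{L}(\mathcal{A})=\mathcal{L}(\mi{toBFA}(e))$ and $\mr{O}(2^{2^{|S|}})$ states, where $|S|\le 2m$. The states of this DFA are the formulas of $\mathcal{F}_S$ up to equivalence.

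The main obstacle is the constant in the exponent. A bound of $2^{2^{2m}}$ is not literally $\mr{O}(2^{2^m})$. I would tighten the count to $|S|\le m+1$ as follows:
\begin{itemize}
\item Count length so that a character $\sigma$ has length $1$ and $\mathcal{T}_B(\sigma)$ contributes one new state.
\item Merge the start state of $\mathcal{T}_B(\sigma)$ into the formula. Since $\tau(s,\sigma)=t$ with $s$ used only in the initial formula, one can replace $s$ by a state representing ``next char is $\sigma$'', or equivalently observe that the state $s$ of $\mathcal{T}_B(\emps)$ and the source state of $\sigma$ are interchangeable in counting.
\end{itemize}
If this tightening turns out to be awkward, I would instead state that length is measured so that it dominates the number of states, which is the convention under which the paper's $\mr{O}(2^{2^m})$ is meant. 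I would then conclude directly.
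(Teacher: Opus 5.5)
Your core argument is the same as the paper's: compose Theorem~\ref{theorem:la-bfa} with Theorem~\ref{theorem:bfa}, where the only real content is that $\mi{toBFA}(e)$ has $\mr{O}(m)$ states. Your structural induction giving $|S|\le 2m$ is correct; the paper only asserts the $\mr{O}(m)$ bound. You are also right that $2^{2^{2m}}$ is not literally $\mr{O}(2^{2^m})$, and the paper glosses over this.

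The step that fails is your proposed tightening.
\begin{itemize}
\item It does not achieve its goal. Since $2^{2^{m+1}}=(2^{2^m})^2$, an inner exponent of $m+1$ is no closer to a literal $\mr{O}(2^{2^m})$ than $2m$ is. Any additive slack in the inner exponent raises the bound to a power, not by a constant factor.
\item The ``merge the start state'' idea is not an argument as stated, since $\mathcal{T}_B(\sigma)$ has two states by definition.
\item It is not needed for $m+1$ anyway. Suppose every operator, concatenation included, counts toward the length, and let $\ell$ and $u$ be the numbers of leaves and unary nodes. Binary nodes number $\ell-1$, so $|S|\le 2\ell+u\le m+1$ directly. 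Equality holds for every $\emps$-free expression.
\end{itemize}
So $\mathcal{T}_B$ does not in general give the $|S|\le m$ that a literal reading would require. Drop the tightening and state the conclusion as $2^{2^{\mr{O}(m)}}$. That is exactly the reading under which the paper's one-line proof (``$\mi{toBFA}$ constructs a BFA with $\mr{O}(m)$ states'') establishes the corollary.
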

\begin{proof}\normalfont
This follows from Theorem \ref{theorem:bfa} and the fact that $\mi{toBFA}$ constructs a BFA with $\mr{O}(m)$ states.\qed
\end{proof}
\begin{corollary}
Matching of a REwLA of length $m$ against a string of length $n$ can be computed
in $\mr{O}(n + 2^{2^m})$ time and $\mr{O}(2^{2^m})$ space.
Note that this complexity does not include the cost of constructing the DFA. \qed
\end{corollary}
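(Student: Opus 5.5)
The plan is to obtain the DFA of Corollary~\ref{corollary:dfa-size}, with $\mr{O}(2^{2^m})$ states, and run it on the input string. Before matching begins, this DFA is tabulated explicitly. Its states are the formulas of $\mathcal{F}_S$ up to equivalence, where $S$ is the state set of $\mi{toBFA}(e)$ and $|S|=\mr{O}(m)$. Its transition table maps each state and each $\sigma\in\Sigma$ to a state. Since $\Sigma$ is a fixed finite alphabet, the table has $\mr{O}(|\Sigma|\cdot 2^{2^m}) = \mr{O}(2^{2^m})$ entries. The statement explicitly excludes the cost of building the DFA, so I only need to account for storing and running it.

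Next I apply Theorem~\ref{theorem:dfa}. Regard the DFA as a weighted DFA over the Boolean semiring $(\{\mi{False},\mi{True}\},\vee,\wedge,\mi{False},\mi{True})$. This semiring is idempotent, every element is constant-size, and both operations are constant time and space. Theorem~\ref{theorem:dfa} then gives the value $\mathcal{V}(\mathcal{A},w)$ in $\mr{O}(m'+n)$ time and $\mr{O}(m')$ space, where $m'=\mr{O}(2^{2^m})$ is the number of DFA states. Concretely, the run starts at the unique initial state, does one table lookup per input character (so $n$ constant-time steps), and finishes with a membership test in the final-state set. The working space is $\mr{O}(1)$ beyond the stored automaton, and the automaton itself takes $\mr{O}(2^{2^m})$ space.

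For correctness, $w$ is matched iff $w\in\mathcal{L}(e)$. By Corollary~\ref{corollary:dfa-size}, which rests on Theorem~\ref{theorem:la-bfa} and Theorem~\ref{theorem:bfa}, this holds iff $w\in\mathcal{L}(\mathcal{A})$, that is, iff $\mathcal{V}(\mathcal{A},w)\neq\rzero$. So the computed value decides the matching problem.

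The only delicate step is justifying that each lookup really costs constant time. This requires states to be represented as indices into a precomputed table rather than as formulas that are substituted into and then checked for equivalence at run time. I would state explicitly that the construction canonicalizes formulas up to equivalence once, and that this work belongs to the excluded construction cost. With that in place, the total is $\mr{O}(n+2^{2^m})$ time and $\mr{O}(2^{2^m})$ space.
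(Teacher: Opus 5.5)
Your proposal is correct and follows the paper's intended argument: the corollary comes from running the DFA of Corollary~\ref{corollary:dfa-size}, which has $\mr{O}(2^{2^m})$ states, via Theorem~\ref{theorem:dfa}, and excluding construction cost this gives $\mr{O}(n+2^{2^m})$ time and $\mr{O}(2^{2^m})$ space. Your additional remarks make explicit what the paper leaves implicit: viewing the DFA over the Boolean semiring, absorbing the fixed $|\Sigma|$ into the constant, and treating canonicalized formulas as precomputed table indices.
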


The resulting DFA may contain a very large number of states, $\mr{O}(2^{2^m})$.
We might be able to reduce the number by DFA minimization in practice.

Reconsider the REwLA $(\pla{aaa{*}}a)\rept a$ as an example.
By $\mi{toBFA}$,
we obtain the BFA in Figure \ref{figure:bfa-ex2}, which is equivalent to that in Figure \ref{figure:bfa-ex} but somewhat more complex.
Note that each state $s_i^j$ in Figure \ref{figure:bfa-ex2} corresponds to state $s_i$ in Figure \ref{figure:bfa-ex}.

\begin{figure}
\begin{minipage}{\columnwidth}
\centering
\includegraphics{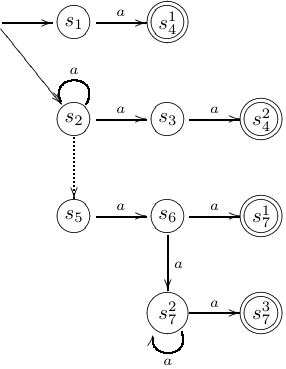}
\end{minipage}
\caption{The BFA obtained as the result of $\mi{toBFA}((\pla{aaa{*}}a)\rept a)$.
The initial expression is $s_1 \vee (s_2 \wedge s_5)$,
$\tau(s_2,a) = (s_2 \wedge s_5) \vee s_3$, and
the final states are $s_4^1$, $s_4^2$, $s_7^1$, $s_7^3$.
}\label{figure:bfa-ex2}
\end{figure}

\section{Translation of Weighted Regular Expressions with Lookaheads into Weighted Finite Automata}\label{section:weight}
In the previous section, we showed a method to convert REwLA into DFA using BFA.
However,
the conversion from BFA to DFA completely destroys the structure of the original expression, and hence,
cannot immediately lead to submatch extraction.
In other words, the conversion from BFA to DFA is nontrivial when there are weights.

Recall $(\pla{aaa{*}}a)\rept a$ and the
BFA corresponding to it (Figure \ref{figure:bfa-ex}).
This BFA retains the structure of the original regular expression, enabling us to assign an appropriate weight to each transition.
In light of this observation, it would be natural to construct a DFA and a weighted NFA, corresponding to the lookahead part and the remaining part, respectively, and run these two in parallel.
However, a simple parallel run is not sufficient.
For example,
consider the state of the BFA in Figure \ref{figure:bfa-ex} after scanning $a$;
the reachable states are $s_2$, $s_3$, $s_4$, $s_5$, $s_6$;
the string is accepted if ``$s_4$ or ($s_3$ and $s_6$) or ($s_2$ and $s_5$ and $s_6$)'' reaches a final state.
Note that the conditions ``$s_5$'' and ``$s_6$'' are required by the lookaheads of the second and first iterations of the repetition, respectively.
This demonstrates that the presence of the repetition may introduce
constraints that appear impossible to derive solely from the automaton corresponding to the lookahead part.

Taking the above discussion into account, we formulate function $\mi{toNFA}$ that constructs a weighted NFA.
It consists of two components.
\[
\mi{toNFA}(\check e) = \mathcal{T}^2_N(\mathcal{T}^1_N(\check e))
\]

First, we define $\mathcal{T}^1_N$. This is exactly the Thompson construction \cite{Thom68},
except that it regards lookaheads as characters.
Note that the transition function returns $\rzero$ for all undefined inputs.
Also, the state names are suitably renamed so that they are mutually distinct.
\newcommand{\plas}[1]{{\lfloor\!{#1}\!\rfloor}}
\newcommand{\nlas}[1]{{\lceil\!{#1}\!\rceil}}
\newcommand{\mul}[1]{\multicolumn{3}{@{}l@{}}{#1}}
\renewcommand{\smallskip}{\relax}
\[
\begin{array}{@{}lcl@{}}
\mathcal{T}^1_N(\emps) &=&
(\setof{s},\Sigma,\emptyset,\setof{s},\setof{s})
\smallskip\\
\mathcal{T}^1_N(\sigma) &=& (\setof{s,t},\Sigma, \setof{((s,\sigma, t),\rone)},\setof{s},\setof{t})
\smallskip\\
\mathcal{T}^1_N(k) &=& (\setof{s,t},\Sigma, \setof{((s,\emps, t),k)},\setof{s},\setof{t})
\smallskip\\
\mathcal{T}^1_N(\check e_1 \mid \check e_2) &=& (S_1 \cup S_2 \cup \setof{\hat s,\hat t}, \Sigma_1\! \cup \Sigma_2, \hat\tau, \setof{\hat s}, \setof{\hat t})\\
\mul{\qquad\begin{array}{@{}ll@{}}
\mbf{where}&
(S_1,\Sigma_1,\tau_1,\setof{s_1},\setof{t_1}) = \mathcal{T}^1_N(\check e_1)\\
&(S_2,\Sigma_2,\tau_2,\setof{s_2},\setof{t_2}) = \mathcal{T}^1_N(\check e_2)\\
&\hat\tau = \tau_1 \cup \tau_2 \cup
\{((\hat s,\emps,s_1),\rone),((\hat s,\emps,s_2),\rone), ((t_1,\emps,\hat t),\rone),((t_2,\emps,\hat t),\rone)\}
\end{array}}
\smallskip\\
\mathcal{T}^1_N(\check e_1 \cdot \check e_2) &=& (S_1 \cup S_2, \Sigma_1 \cup \Sigma_2, \hat\tau, I_1, F_2)\\
\mul{\qquad\begin{array}{@{}ll@{}}
\mbf{where}&
(S_1,\Sigma_1,\tau_1,I_1,\setof{t_1}) = \mathcal{T}^1_N(\check e_1)\\
&(S_2, \Sigma_2,\tau_2,\setof{s_2}, F_2) = \mathcal{T}^1_N(\check e_2)\\
&\hat\tau = \tau_1\cup \tau_2 \cup\setof{((t_1,\emps, s_2),\rone)}
\end{array}}
\smallskip\\
\mathcal{T}^1_N(\check e{*}) &=& (S,\Sigma', \hat \tau,\setof{s},\setof{t})\\
\mul{\qquad\begin{array}{@{}ll@{}}
\mbf{where}&
(S,\Sigma',\tau,\setof{s},\setof{t}) = \mathcal{T}^1_N(\check e)\\
& \hat\tau = \tau \cup \{((s,\emps,t),\rone),((t,\emps,s),\rone)\}\\
\end{array}}
\smallskip\\
\mathcal{T}^1_N(\pla{e}) &=&
(\setof{s,t},\Sigma \cup \setof{\plas{e}}, \setof{((s,\plas{e},t),\rone)},\setof{s},\setof{t})
\smallskip\\
\mathcal{T}^1_N(\nla{e}) &=&
(\setof{s,t},\Sigma \cup\setof{\nlas{e}}, \setof{((s,\nlas{e},t),\rone)},\setof{s},\setof{t})
\end{array}\]

The function $\mathcal{T}^2_N$ constructs a BFA representing the constraints regarding lookaheads and takes the product with the original automaton.
This BFA adds, for each lookahead, the corresponding initial expression as a constraint, and changes the constraints according to the characters read.

\[
\begin{array}{@{}lcl@{}}
\mathcal{T}^2_N(S,\Sigma',\tau,I,F) &=& (S \!\times\! \mathcal{F}_{B}, \Sigma,\hat\tau, I \!\times\! \setof{\mi{True}}, \scmp{(s,f)}{s \!\in\! F \wedge F_B \!\vdash\! f})\\
\mul{\qquad\begin{array}{@{}ll@{}}
\mbf{where}&
\mathcal{S} =
\scmp{(\plas{e},\mi{toBFA}(e))}{\plas{e} \in \Sigma' \setminus \Sigma}
{}\cup{}
\scmp{(\nlas{e},\neg\mi{toBFA}(e))}{\nlas{e} \in \Sigma' \setminus \Sigma}\\
&
(B,\Sigma,f_B,\tau_B, F_B) = \prod_{(e,\mathcal{B}) \in \mathcal{S}} \mathcal{B}\\
&
\hat\tau = \{(((s,f),\delta(e),(t,\xi(e,f))),r) \mid {}
((s,e,t),r) \in \tau\wedge f \in \mathcal{F}_B \} \\
& \delta(e) = \mathbf{if}~e \in \Sigma ~\mbf{then}~e~\mbf{else}~\emps\\
& \xi(e,f) = \mbf{if}~e \in \Sigma ~\mbf{then}~f[\tau_B(e,b)/b]_{b\in B}\\
&\phantom{\xi(e,f) ={}}
\mbf{else}~\mbf{if}~e = \emps ~\mbf{then}~f\\
&\phantom{\xi(e,f) ={}}
\mbf{else}~\mbf{let}~(\tilde B,\Sigma,\tilde \tau,\tilde f,\tilde F) = \mathcal{S}(e)~\mbf{in}~f\wedge \tilde f
\end{array}}
\end{array}
\]

By the above construction, we can convert a weighted REwLA into a weighted NFA.
\begin{lemma}\label{lemma:weighted-1}
For $(S,\Sigma',\tau,\setof{s_0},\setof{t_0}) = \mathcal{T}^1_N(\check e)$, consider the minimal $\rho^\dagger$ satisfying the following.
\[\begin{array}{@{}l@{~\,}c@{~\,}l@{}}
  \rho^\dagger(\emps,w') &\supseteq& \{(s_0,\rone)\}\\
  \rho^\dagger(w\sigma,w') &\supseteq&
\{ (t,r \otimes r')\mid (s,r) \in \rho^\dagger(w,\sigma w') \wedge {}
((s,\sigma,t),r') \in \tau\}\\
  \rho^\dagger(w,w') &\supseteq& \\
\multicolumn{3}{@{}l@{}}{\quad
\{(t,r \otimes r') ~\mid~
(s,r) \in \rho^\dagger(w,w') \wedge ((s,a,t),r') \in \tau \wedge {}
( a = \emps \vee (a = \plas{e} \wedge w' \in \mathcal{L}(e))\vee {}
(a = \nlas{e} \wedge w' \not\in \mathcal{L}(e)))\}
}
\end{array}
\]
Then, the following holds.
\[
\begin{array}{@{}l@{}}
(w',v) \in \check{\mathcal{V}}(\check e,(w\cdot w',\rone))
\iff
(t_0,v) \in \rho^\dagger(w,w')
\end{array}
\]
\end{lemma}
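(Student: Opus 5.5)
We prove the equivalence by structural induction on $\check e$. To make the induction go through, we first generalise the statement so that the weight need not start at $\rone$ and the run need not start at $s_0$ with the empty consumed prefix. Concretely, for the sub-automaton $(S,\Sigma',\tau,\setof{s_0},\setof{t_0}) = \mathcal{T}^1_N(\check e)$ we define the relation ``$(s,r) \leadsto_{u} (t,r')$ relative to the remaining input $w'$''. It is the least relation closed under the three clauses of $\rho^\dagger$, read as moves from one configuration $(\text{state},\text{weight},\text{remaining suffix})$ to another. The claim to prove for every subexpression is then
\[
(w',v\otimes v') \in \mathcal{V}'(\check e,(u\cdot w',v)) \iff (s_0,v) \leadsto_{u} (t_0,v\otimes v') \text{ with remaining input } w'.
\]
Because $\otimes$ is associative and $\rone$ is its unit, $\mathcal{V}'(\check e,(x,v)) = \scmp{(y,v\otimes r)}{(y,r)\in\mathcal{V}'(\check e,(x,\rone))}$, a fact proved by a trivial induction. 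So it suffices to track weights starting from $\rone$ and multiply by $v$ on the left. The lemma is the special case $u = w$, $v=\rone$, because $\rho^\dagger(w,w')$ is exactly the set of configurations reachable from $(s_0,\rone)$ after consuming $w$ with $w'$ left.

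The base cases are direct. For $\emps$ the single state is both $s_0$ and $t_0$, so $u=\emps$. For $\sigma$ the only path is one $\sigma$-edge, and this matches $\mathcal{V}'(\sigma,\cdot)$. For $k$ there is one $\emps$-edge of weight $k$. For $\pla{e}$ and $\nla{e}$, the third clause allows the $\plas{e}$ (respectively $\nlas{e}$) edge exactly when $w'\in\mathcal{L}(e)$ (respectively $w' \notin \mathcal{L}(e)$), and this mirrors the semantic clause. Alternation holds because every path from $\hat s$ to $\hat t$ passes wholly through one branch via the $\rone$-weighted $\emps$ edges. Concatenation holds because every path from $s_1$ to $t_2$ crosses the unique edge $(t_1,\emps,s_2)$ exactly once. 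That crossing splits $u=u_1u_2$, and the remaining input at the crossing point is $u_2w'$. Applying the induction hypothesis to each half, with the intermediate suffix $u_2 w'$, gives the set-comprehension of $\mathcal{V}'(\check e_1\cdot\check e_2)$. It is essential here that the lookahead guard in $\rho^\dagger$ depends only on the current remaining input, so both sub-runs see the correct suffix.

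The main obstacle is the star case. There we must show that paths from $s$ to $t$ in $\mathcal{T}^1_N(\check e{*})$ correspond to the iterations allowed by $\mathcal{V}'(\check e{*},\cdot)$, which includes the side condition $\check x\neq\check w$. My plan is to decompose any path at its visits to the back-edge $(t,\emps,s)$. Each visit to $s$ that is followed by a traversal to $t$ through the body is one iteration, and the induction hypothesis applies to it. For the direction from semantics to automaton, a run with iterations $x_0\to x_1\to\cdots$ translates directly into the concatenation of body paths joined by back-edges; the final edge $(s,\emps,t)$ handles the empty case. For the direction from automaton to semantics, an iteration whose body path leaves the configuration unchanged (same suffix, same weight) can simply be deleted from the path. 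This uses that $\emps$-loops carry only weight $\rone$, and that we disregard the pathological case $k\rept$ as stated in the paper. After deletion, every remaining iteration satisfies $\check x\neq\check w$. Because $\oplus$ is idempotent and we are comparing sets of pairs, possible duplicate weights are harmless. Finally, minimality of $\rho^\dagger$ guarantees that every element arises from a finite derivation, so both directions can be argued by induction on derivation length nested inside the structural induction. That completes the proof.
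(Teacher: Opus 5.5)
\paragraph{The gap is in the star case.} Your overall route, structural induction generalised to arbitrary start configurations, is the same as the paper's one-line proof. Your base, alternation and concatenation cases are fine. The star case, however, cannot be closed, because the statement is false for the construction as given.

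The paper's $\mathcal{T}^1_N(\check e{*})$ creates no fresh states. It adds the skip edge $(s,\emps,t)$ and the back edge $(t,\emps,s)$ directly on the body's own start and final states. Your decomposition (``each visit to $s$ followed by a traversal to $t$ through the body is one iteration; the edge $(s,\emps,t)$ handles the empty case'') silently assumes two things: that $t$ has no outgoing body edges and that $s$ has no incoming body edges. Only then can the skip edge be used solely as a whole-run shortcut.

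Both assumptions fail as soon as the body begins or ends with a star. A run can take the skip edge midway and then leave $t$ through an edge internal to the body. Alternatively, it can reach $s$ from inside the body and then skip. Such segments are not $s$-to-$t$ runs of the body, so your induction hypothesis says nothing about them.

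\paragraph{Counterexample.} Take $\check e=(a\,b{*}){*}$, so $s_0=s_a$ and $t_0=t_b$. The run $s_a\xrightarrow{\emps}t_b\xrightarrow{\emps}s_b\xrightarrow{b}t_b$ uses the outer skip edge, then the inner back edge, then $b$. It gives $(t_0,\rone)\in\rho^\dagger(b,\emps)$. On the other side, $\mathcal{V}'(a\,b{*},(b,\rone))=\emptyset$, so $\mathcal{V}'(\check e,(b,\rone))=\setof{(b,\rone)}$, which does not contain $(\emps,\rone)$. Symmetrically, $(b{*}a){*}$ ``matches'' $b$ via $s_b\xrightarrow{b}t_b\xrightarrow{\emps}s_b\xrightarrow{\emps}t_a$. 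The paper's ``easy induction'' has exactly the same hole.

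\paragraph{Repair.} The missing idea is a repair of the construction, after which your argument goes through essentially verbatim. Use the genuine Thompson star with fresh states:
$\mathcal{T}^1_N(\check e{*})=(S\cup\setof{\hat s,\hat t},\Sigma',\hat\tau,\setof{\hat s},\setof{\hat t})$, where
$\hat\tau=\tau\cup\{((\hat s,\emps,s),\rone),((\hat s,\emps,\hat t),\rone),((t,\emps,s),\rone),((t,\emps,\hat t),\rone)\}$.

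The new edges enter $S$ only at $s$ and leave it only from $t$. Hence every run from $\hat s$ to $\hat t$ is either the single edge $(\hat s,\emps,\hat t)$, or a sequence of body runs from $s$ to $t$ separated by $(t,\emps,s)$. Your induction hypothesis applies to each such segment. Your deletion of configuration-preserving iterations then correctly discharges the side condition $\check x\neq\check w$.

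\paragraph{Minor point.} That same deletion argument is also what your weight-scaling identity needs in the star clause, so it is not quite a ``trivial induction''. The condition $\check x\neq\check w$ is not preserved under left multiplication by $v$; for instance, with $v=\rzero$ it forces the suffix to change, whereas starting from $\rone$ it does not.
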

\begin{proof}\normalfont
This can be easily shown by induction on the structure of $\check e$.\qed
\end{proof}
\begin{lemma}\label{lemma:weighted-2}
For $(S,\Sigma',\tau,\setof{s_0},\setof{t_0}) = \mathcal{T}^1_N(\check e)$,
consider the $\rho^\dagger$ defined in Lemma \ref{lemma:weighted-1}.
Let $\mathcal{A} = \mathcal{T}^2_N(\mathcal{T}^1_N(\check e))$ and let $F$ be the set of final states of $\mathcal{A}$; then the following holds.
\[
(\exists (t,f)\in F.~ ((t,f),v) \in \rho_{\mathcal{A}}(w))
\iff (t_0,v) \in \rho^\dagger(w,\emps)
\]
\end{lemma}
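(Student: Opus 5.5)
We prove a generalized invariant relating runs of $\mathcal{A}$ to runs of $\rho^\dagger$ in which lookahead tests are deferred, and then specialize to $w'=\emps$.

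\emph{Key notation.} For a state $s$ of $\mathcal{T}^1_N(\check e)$, a formula $f\in\mathcal{F}_B$ and a string $w$, write $((s,f),v)\in\rho_{\mathcal{A}}(w)$ with $\rho_\mathcal{A}$ extended to $\emps$-transitions by closure. Let $\mathcal{B}=(B,\Sigma,\tau_B,f_B,F_B)$ be the product BFA. Each component $\mathcal{S}(\plas{e})=\mi{toBFA}(e)$ or $\neg\mi{toBFA}(e)$ has its own initial formula $\tilde f$ over disjoint state names. By the BFA state-transition definition, substituting $\tau_B(\sigma,b)$ for each $b$ in a formula over a component's states is exactly one step of that component's $\rho$. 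Hence, for a formula $\tilde f$ introduced after reading $w_1$, after further reading $u$ it has become $\rho_{\mathcal{S}(a)}(u)$ in the sense of the BFA run started from $\tilde f$.

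\emph{Invariant.} By induction on the derivation (length of $w$ plus number of $\emps$-steps), we show that $((s,f),v)\in\rho_\mathcal{A}(w)$ if and only if there is a path in $\mathcal{T}^1_N(\check e)$ reading $w$ and reaching $s$ with weight $v$, passing lookahead edges $a_1,\dots,a_k$ at prefixes $w_1,\dots,w_k$ of $w$ (with $w=w_i u_i$), such that
\[
f \equiv \textstyle\bigwedge_i \rho_{\mathcal{S}(a_i)}(u_i),
\]
where each $\rho_{\mathcal{S}(a_i)}(u_i)$ starts from the initial expression of $\mathcal{S}(a_i)$. The cases follow the clauses of $\hat\tau$. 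A $\Sigma$-edge applies $\xi$, i.e.\ substitution, which distributes over $\wedge$ and advances every conjunct. An $\emps$-edge leaves $f$ unchanged. A lookahead edge conjoins $\tilde f$, which corresponds to $u=\emps$ for the new conjunct.

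The weights agree because $\hat\tau$ copies $r$ unchanged. The multiset of weights matches, not just the $\oplus$-sums, since $\rho_\mathcal{A}$ and $\rho^\dagger$ are sets of pairs. Caveat: distinct formulas $f$ for the same path collapse only up to equivalence, so I would treat $\mathcal{F}_B$ modulo logical equivalence, as the paper implicitly does.

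\emph{Acceptance of each conjunct.} Theorem \ref{theorem:la-bfa}, together with the definition of negation, gives
\[
F_B \vdash \rho_{\mathcal{S}(\plas{e})}(u) \iff u\in\mathcal{L}(e), \qquad F_B \vdash \rho_{\mathcal{S}(\nlas{e})}(u) \iff u\notin\mathcal{L}(e).
\]
This holds because the final-state sets of the components are disjoint and $F_B$ is their union, so evaluating a component's formula under $F_B$ equals evaluating it under that component's own final set.

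\emph{Specialization and the two directions.} Now take $w'=\emps$ and $s=t_0$. The final states are exactly $(t_0,f)$ with $F_B\vdash f$. Since $F_B$ satisfies a conjunction iff it satisfies every conjunct, the left-hand side holds iff there is a $\mathcal{T}^1_N$-path to $t_0$ of weight $v$ on which every lookahead edge taken at prefix $w_i$ succeeds on the remaining suffix $u_i$.

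It remains to show that this is exactly membership $(t_0,v)\in\rho^\dagger(w,\emps)$. Unfold the minimal fixpoint defining $\rho^\dagger$. A derivation of $(t,v)\in\rho^\dagger(w,\emps)$ is a path through states in which the pair $(w,w')$ always splits the full input as (consumed, remaining), with $w\cdot w'$ fixed to the full input. Each lookahead step at split $(w_i,u_i)$ requires $u_i\in\mathcal{L}(e)$, or $u_i\notin\mathcal{L}(e)$ in the negative case. Conversely, any such path yields a derivation. I would state this as an auxiliary claim proved by induction on derivations, generalized to arbitrary states $s$ and splits $(w,w')$.

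The main obstacle is the invariant step, specifically that substitution commutes with conjunction and that freshly conjoined $\tilde f$'s are advanced independently. The disjointness of state names across the product components, guaranteed by renaming, is what makes the component runs independent, and I would make that explicit there.
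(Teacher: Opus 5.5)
Your proposal is correct and follows the paper's own argument. You lift each $\rho^\dagger$-path to the corresponding run of $\mathcal{A}$, note that lookahead edges conjoin the initial formula of the corresponding BFA while $\Sigma$-edges advance every conjunct by substitution, and invoke Theorem \ref{theorem:la-bfa} to identify $F_B\vdash f$ with the success of every lookahead on its remaining suffix. Your explicit invariant $f\equiv\bigwedge_i\rho_{\mathcal{S}(a_i)}(u_i)$ makes this precise and proves both directions at once, whereas the paper's sketch states it informally for one direction and calls the converse similar. One small nit: repeated occurrences of the same lookahead share variables, so their independence comes from substitution being a homomorphism, not from disjointness, but you already use that fact.
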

\begin{proof}\normalfont
We show that the right-hand side implies the left-hand side. The converse can be shown similarly.

Given state transition
$[(s_0,\rone),(s_1,v_1),\ldots,(s_k,v_k),(t_0,v)]$ leading to $(t_0,v)$ realized by $\rho^\dagger$,
there must exist a corresponding transition
$[((s_0,\mi{True}),\rone),((s_1,f_1),v_1),\ldots,((s_k,f_k),v_k),((t_0,f),v)]$ by $\rho_\mathcal{A}$.
Therefore,
it suffices to show $F_B\vdash f$ for $F_B$ defined in $\mathcal{T}^2_N$.
If the state transition from $(s_i,v_i)$ to $(s_{i+1},v_{i+1})$ in $\rho^\dagger$ is labeled with $\plas{e}$ or $\nlas{e}$ (here, tentatively assume $\plas{e}$),
the corresponding transition in $\rho_\mathcal{A}$ is the one from $((s_i,f_i),v_i)$ to $((s_{i+1},f_i\wedge f'),v_{i+1})$, where $f'$ is the initial formula of the BFA corresponding to $e$.
In the other cases the formula is rewritten according to the state transition of the BFA.
Since the transition was possible in $\rho^\dagger$,
the remaining string that has not yet been processed must be contained in $\mathcal{L}(e)$. As stated in Theorem \ref{theorem:la-bfa}, this is equivalent to the transitions of the corresponding BFA satisfying the acceptance condition.\qed
\end{proof}
\begin{theorem}\label{theorem:weighted}
\[\mathcal{V}(\check e,w) = \mathcal{V}(\mi{toNFA}(\check e),w)\]
\end{theorem}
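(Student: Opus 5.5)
The proof chains Lemma \ref{lemma:weighted-1} and Lemma \ref{lemma:weighted-2} and then folds with $\bigoplus$ on both sides. Instantiate Lemma \ref{lemma:weighted-1} with $w' = \emps$. This gives, for every $v \in R$,
\[
(\emps,v) \in \mathcal{V}'(\check e,(w,\rone)) \iff (t_0,v) \in \rho^\dagger(w,\emps),
\]
where $(S,\Sigma',\tau,\setof{s_0},\setof{t_0}) = \mathcal{T}^1_N(\check e)$. Next, Lemma \ref{lemma:weighted-2} rewrites the right-hand side as $\exists (t,f) \in F.~((t,f),v) \in \rho_{\mathcal{A}}(w)$, with $\mathcal{A} = \mi{toNFA}(\check e)$. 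Combining the two yields an equality of sets of weights:
\[
\setof{v \mid (\emps,v) \in \mathcal{V}'(\check e,(w,\rone))}
= \setof{r \mid (s,r) \in \rho_{\mathcal{A}}(w) \wedge s \in F}.
\]
Applying $\bigoplus$ to both sides, the left is $\mathcal{V}(\check e,w)$ by definition and the right is $\mathcal{V}(\mathcal{A},w)$ by the definition of the value of a weighted NFA. This concludes the proof.

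One technicality must be settled first: $\mathcal{A}$ contains $\emps$ transitions, while $\rho_\mathcal{A}$ was defined only for $\emps$-free automata. I will read $\rho_\mathcal{A}$ as closed under $\emps$ transitions after each step, including the initial one. This is the same convention under which Lemma \ref{lemma:weighted-2} was stated, so the two sides match exactly. The paper's assumption that $\emps$-loops carry only weight $\rone$ ensures that this closure adds no new weights beyond finitely many distinct values. Hence the set equality is between well-defined sets, and $\bigoplus$ of each is well defined by idempotency. It is worth noting that the sets being compared are sets, not multisets, so idempotency of $\oplus$ is exactly what makes the folding legitimate.

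The main obstacle is not in this theorem itself but in the trust placed in Lemma \ref{lemma:weighted-2}. That lemma needs its converse direction: every run of $\mathcal{A}$ ending in a final state $(t_0,f)$ with $F_B \vdash f$ must project to a $\rho^\dagger$ run. This holds because the conjunct $f \wedge \tilde f$ recorded at each lookahead is evaluated correctly on the remaining suffix by Theorem \ref{theorem:la-bfa}, and acceptance of a conjunction forces acceptance of each conjunct. If I were to check any step in detail, it would be this one: that the formula carried along the run, rewritten by $\tau_B$ for every character read, coincides with $\tilde\rho$ applied to the lookahead's initial expression on the suffix consumed after the lookahead was entered. Both the negated and un-negated components are substituted with the same $\tau_B$, so this reduces to a routine induction on the run length.
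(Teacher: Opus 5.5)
Your proposal matches the paper's proof, which says only that the theorem is apparent from Lemmas~\ref{lemma:weighted-1} and~\ref{lemma:weighted-2}; instantiating the first at $w'=\emps$, rewriting with the second, and applying $\bigoplus$ to the resulting equal sets of weights is exactly that chaining. As an aside, the fragile link is Lemma~\ref{lemma:weighted-1}, not the converse of Lemma~\ref{lemma:weighted-2}: the star clause of $\mathcal{T}^1_N$ reuses the body's start and end states instead of adding fresh ones, so in $\mathcal{T}^1_N((a{*}\cdot b){*})$ a run on $a$ can take the inner $\emps$ back edge to the shared start state and then the outer $\emps$ skip edge to the final state, which accepts $a \notin \mathcal{L}((a{*}\cdot b){*})$.
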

\begin{proof}\normalfont
Apparent from Lemmas \ref{lemma:weighted-1} and \ref{lemma:weighted-2}.
\qed
\end{proof}
\begin{corollary}\label{coro:wnfa-size}
For a weighted REwLA $\check e$ of length $m$, we can construct a weighted NFA $\mathcal{A}$ with $\mr{O}(2^{2^m})$ states such that $\mathcal{V}(\check e,w) = \mathcal{V}(\mathcal{A},w)$.
\end{corollary}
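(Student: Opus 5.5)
We take $\mathcal{A} = \mi{toNFA}(\check e) = \mathcal{T}^2_N(\mathcal{T}^1_N(\check e))$. Theorem \ref{theorem:weighted} already gives $\mathcal{V}(\check e,w) = \mathcal{V}(\mathcal{A},w)$, so only the state count remains. Two further points need care: $\mathcal{A}$ contains $\emps$ transitions, whose elimination must not increase the number of states, and the quantity $|\mathcal{F}_B|$ must be bounded in terms of $m$.

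\emph{Step 1: bound the components.} By induction on the structure of $\check e$, $\mathcal{T}^1_N(\check e)$ has $\mr{O}(m)$ states. Each constructor adds at most two fresh states, and a lookahead $\pla{e}$ or $\nla{e}$ contributes exactly two, regardless of the size of $e$. The alphabet $\Sigma'$ contains at most one extra symbol per lookahead occurring in $\check e$.

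For the BFA part, as in the proof of Corollary \ref{corollary:dfa-size}, $\mi{toBFA}(e)$ has $\mr{O}(|e|)$ states. This is because every clause of $\mathcal{T}_B$ adds at most one fresh state. Negation leaves the state set unchanged, and the product takes the union of state sets. Since the lookahead subexpressions of $\check e$ occupy disjoint parts of $\check e$, the state set $B$ of $\prod_{(e,\mathcal{B})\in\mathcal{S}}\mathcal{B}$ satisfies $|B| \le c\cdot m$ for some constant $c$.

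\emph{Step 2: count the states of $\mathcal{A}$.} The state set of $\mathcal{A}$ is $S\times\mathcal{F}_B$, where $\mathcal{F}_B$ is taken up to logical equivalence. This is legitimate because the transition map $\xi$ (substitution and conjunction) respects equivalence, and so does acceptance ($F_B\vdash f$). So we quotient by equivalence. This gives $|S|\cdot 2^{2^{|B|}}$ states. Absorbing the constant factors into the notation, as done in Corollary \ref{corollary:dfa-size} where $2^{2^{\mr{O}(m)}}$ is written $\mr{O}(2^{2^m})$, yields the claimed bound. The main subtlety is exactly this constant in the exponent. Following the paper's convention, we read $m$ as the size measure (the number of constructs, so that $|B|\le m$). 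Otherwise we state the bound as $2^{2^{\mr{O}(m)}}$.

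\emph{Step 3: remove the $\emps$ transitions.} $\mathcal{A}$ contains $\emps$ transitions, arising both from $\mathcal{T}^1_N$ and from lookahead edges mapped by $\delta$. We remove them by the standard $\emps$-closure on the same state set. For a state $p$, we compute the $\oplus$-sum of the weights of all $\emps$-paths leaving $p$, compose these with the subsequent $\sigma$-transitions, and adjust the initial states. The paper's standing assumption is that $\emps$-loops carry only $\rone$. Together with idempotency of $\oplus$, this makes the closure weights finite sums over simple paths, so the value is preserved and the state count does not increase. We expect this to be the only real obstacle. In particular, the final states must be adjusted to account for trailing $\emps$ paths, including lookahead edges that conjoin a formula into the second component. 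The fix is to define the final states as those whose $\emps$-closure reaches some $(t_0,f)$ with $F_B\vdash f$, using the weighted sum of those closure paths.
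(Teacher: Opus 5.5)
Your Steps 1--2 are the paper's argument: take $\mi{toNFA}(\check e)$, get correctness from Theorem~\ref{theorem:weighted}, and count $|S|\cdot|\mathcal{F}_B|$ states with $|S|=\mr{O}(p)$ and $|B|=\mr{O}(q)$, where your check that $\xi$ and $F_B\vdash f$ respect logical equivalence makes explicit what the paper leaves implicit. Drop Step~3: the paper explicitly allows $\emps$ transitions in weighted NFAs, and its model (plain sets $I,F$, no initial or final weights) cannot support $\emps$-removal in general (for $\check e = k$ with $k\notin\setof{\overline{0},\overline{1}}$ the value on $\emps$ is $k$, while every $\emps$-free weighted NFA of the paper yields $\overline{1}$ or $\overline{0}$ there), so your weighted final states would quietly change the model.
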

\begin{proof}\normalfont
Let $p$ be the total size of the non-lookahead parts in $\check e$ and $q$ the size of the lookahead parts; then the number of states of the weighted NFA obtained by $\mi{toNFA}$ is
$\mr{O}(p \cdot 2^{2^{q}})$.
This is because $\mathcal{T}^2_N$ uses, as the states of the NFA, formulas whose propositional variables are the states of a BFA with $\mr{O}(q)$ states.
Also, the correctness of this construction follows from Theorem \ref{theorem:weighted}.\qed
\end{proof}
\begin{corollary}\label{coro:wnfa-cost}
The weight computation of a weighted REwLA $\check e$ of length $m$ for a string of length $n$ can be computed
in $\mr{O}(n \cdot 2^{2^m})$ time and $\mr{O}(2^{2^m})$ space, provided that each operator of the semiring can be computed in constant time and constant space.
\end{corollary}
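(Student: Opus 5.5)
The plan is to combine Corollary \ref{coro:wnfa-size} with Theorem \ref{theorem:nfa}. The only real work is to remove the $\emps$ transitions produced by $\mi{toNFA}$ and to bound the size $k$ of the resulting transition function.

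First, by Corollary \ref{coro:wnfa-size} and Theorem \ref{theorem:weighted}, we obtain a weighted NFA $\mathcal{A} = \mi{toNFA}(\check e)$ with $\mr{O}(p\cdot 2^{2^q}) \subseteq \mr{O}(2^{2^m})$ states such that $\mathcal{V}(\check e,w) = \mathcal{V}(\mathcal{A},w)$. This automaton still contains $\emps$ transitions, coming from $\mathcal{T}^1_N$ and from the lookahead labels, which $\delta$ turns into $\emps$. Theorem \ref{theorem:nfa} is stated for automata without them, so we eliminate them by the standard $\emps$-closure.

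For each pair of states $(s,s')$, compute $c(s,s') = \bigoplus$ of the weights of all $\emps$-paths from $s$ to $s'$. This sum is finite and well defined because $\oplus$ is idempotent. Moreover, by the paper's assumption, every $\emps$-cycle carries only weight $\rone$, so it suffices to consider simple paths. Then replace the transitions by $\tau'((s,\sigma,s')) = \bigoplus_{u,u'} c(s,u)\otimes\tau(u,\sigma,u')\otimes c(u',s')$. We also adjust the initial side: either keep the initial state set and close the final states under $c$, or add a fresh initial state with suitable weights. Distributivity of $\otimes$ over $\oplus$, together with idempotency (as in the proof of Theorem \ref{theorem:nfa}), shows that $\mathcal{V}$ is unchanged. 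This preprocessing is part of the construction, and its cost is excluded, as the paper does for DFA construction.

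Next, we bound $k$. The subtle point is that a naive bound $k \le |S|^2|\Sigma|$ would give $\mr{O}(n\cdot 2^{2^{m+1}})$ rather than $\mr{O}(n\cdot 2^{2^m})$. I would exploit the structure of $\hat\tau$ in $\mathcal{T}^2_N$: for a character transition, the formula component is determined by the formula before the transition, since $\xi(\sigma,f)$ is a function of $f$. Each state $(s,f)$ therefore has at most as many $\sigma$-successors as $s$ has in $\mathcal{T}^1_N$, and this is $\mr{O}(1)$ for a Thompson automaton. After the closure, this argument is the main obstacle. The $\emps$-closure from $(s,f)$ reaches states $(u,f')$ where $f'$ is $f$ conjoined with lookahead initial formulas along the path. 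There are only $\mr{O}(p)$ choices of $u$, and for each $u$ the set of reachable $f'$ is bounded by the number of lookahead combinations, which depends only on $q \le m$. I would try to argue that, up to equivalence, the closure from $(s,f)$ yields $\mr{O}(p\cdot 2^{q})$ successors, so that $k = \mr{O}(|S|\cdot p\,2^{q}|\Sigma|)$. This is still $\mr{O}(2^{2^m})$ after absorbing the polynomial factors into the constant in the exponent; this is the kind of slack the paper's $\mr{O}$ notation for Corollary \ref{coro:wnfa-size} already uses. If this absorption is too loose, a fallback is to avoid precomputing the closure and instead apply the $\emps$-closure on the fly at each step, with each state having $\mr{O}(1)$ outgoing edges. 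Then each step processes $\mr{O}(|S|)$ edges, because the $\emps$-edges out of $(s,f)$ are also $\mr{O}(1)$ in number ($\xi$ is deterministic).

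Finally, applying the vector update of Theorem \ref{theorem:nfa} gives $\mr{O}((|S|+k)n) = \mr{O}(n\cdot 2^{2^m})$ time. Only the vector $(\rho_{s})_{s\in S}$ must be stored, and each entry is constant-size by assumption, so the space is $\mr{O}(|S|) = \mr{O}(2^{2^m})$, independent of $n$.
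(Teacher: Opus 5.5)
Your proposal is correct, and your fallback is exactly the paper's proof. The paper never eliminates $\emps$-transitions. It observes that $\hat\tau$ in $\mathcal{T}^2_N$ has one edge for each pair of an edge of $\mathcal{T}^1_N(\check e)$ and a formula $f\in\mathcal{F}_B$. So the transition function has size $|\tau|\cdot|\mathcal{F}_B| = \mr{O}(p\cdot 2^{2^q})$, and the paper feeds this, together with the state bound of Corollary~\ref{coro:wnfa-size}, into Theorem~\ref{theorem:nfa}. In doing so it tacitly relies on its standing convention that $\emps$-edges are allowed and that every $\emps$-loop carries only weight $\rone$; that convention is what makes one closure pass per input character linear.

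Your primary route, precomputing the closure, is more honest, since the proof of Theorem~\ref{theorem:nfa} only treats character transitions. But it gives up the $\mr{O}(1)$ out-degree and forces an extra counting argument, which as written needs two repairs:
\begin{itemize}
\item Your $\tau'$ closes on both sides of the $\sigma$-step. The successors of $(s,f)$ therefore have the form $(s',\xi(\sigma,f\wedge g)\wedge g')$ with two independent lookahead conjunctions $g,g'$. The count is $\mr{O}(p\cdot 4^{q})$ rather than $\mr{O}(p\cdot 2^{q})$, which is harmless for the final bound.
\item The option ``keep the initial states and close the final states under $c$'' is unsound here, because final states in the paper's weighted NFAs carry no weight. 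For $a\cdot k$, the target of $a$ becomes final, and reading $a$ yields $\rone\oplus k$ instead of $k$. With your two-sided $\tau'$ you should keep the original final states. The empty input must then be handled separately: its value (e.g.\ $k$ for $\check e = k$) cannot be produced by an $\emps$-free automaton that has no initial or final weights.
\end{itemize}
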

\begin{proof}\normalfont
This follows from Theorem \ref{theorem:nfa} and Corollary \ref{coro:wnfa-size}.
Note that the size of the transition function of the weighted NFA obtained by $\mi{toNFA}$ is
$\mr{O}(p \cdot 2^{2^{q}})$, where $p$ is the total size of the non-lookahead parts in $\check e$ and $q$ is the size of the lookahead parts.
Note that this complexity does not include the time required to construct the weighted NFA.
\qed
\end{proof}

Although the worst-case complexity is doubly exponential in $m$,  the length of the regular expression,
we expect realistic performance in practice.
As seen in Corollary \ref{coro:wnfa-size}, the double exponential blow-up is with respect to the size of the lookaheads; hence, it is not problematic for small lookaheads.
Moreover, by identifying formulas that are not distinguished by transitions,
we may be able to reduce the size of the weighted NFA.

As an example, consider the
REwLA $(a\pla{aa{*}})\rept a$.
For simplicity, we ignore weights.
Applying $\mathcal{T}^1_N$ to this REwLA
yields the NFA in Figure \ref{figure:weight-A1}, and
further applying $\mathcal{T}^2_N$ yields
Figure \ref{figure:weight-example}.
For clarity,
as the BFA obtained from $\mi{toBFA}(aa{*})$,
we use the simpler
BFA $(\setof{b_0,b_1},\scmp{((b,a),b_1)}{b \in \setof{b_0,b_1}},b_0,\setof{b_1})$, which is equivalent to the one computed according to the definition.

\begin{figure}
\begin{minipage}{\columnwidth}
\centering
\begin{picture}(170,50)
\put(-5,-20){
\put(-5,40){\vector(1,0){8}}
\put(10,40){\circle{14}}\put(10,38){\makebox[0pt][c]{$s_1$}}
\put(20,40){\vector(1,0){25}} \put(30,43){$a$}
\put(55,40){\circle{14}}\put(55,38){\makebox[0pt][c]{$s_2$}}
\put(65,40){\vector(1,0){35}} \put(70,43){\small$\plas{aa{*}}$}
\put(110,40){\circle{14}}\put(110,38){\makebox[0pt][c]{$s_3$}}
\put(120,40){\vector(1,0){25}} \put(130,43){$a$}
\put(155,40){\circle{12}}\put(155,38){\makebox[0pt][c]{$s_4$}}
\put(155,40){\circle{16}}

\put(110,50){\line(0,1){10}}
\put(110,60){\line(-1,0){100}}
\put(10,60){\vector(0,-1){10}}
\put(55,62){$\emps$}
\put(110,20){\vector(0,1){10}}
\put(110,20){\line(-1,0){100}}
\put(10,30){\line(0,-1){10}}
\put(55,22){$\emps$}
}
\end{picture}
\end{minipage}%
\caption{The NFA obtained as the result of $\mathcal{T}^1_N((a\pla{aa{*}})\rept a)$}\label{figure:weight-A1}
\end{figure}

\begin{figure}
\begin{minipage}{\columnwidth}
\centering
\begin{picture}(205,115)
\put(20,30){
\put(0,30){
\put(120,40){\circle{20}}
\put(120,42.5){\makebox[0pt][c]{\small$s_3$}}
\put(120,34.5){\makebox[0pt][c]{\scriptsize$\mi{True}$}}
\put(135,40){\vector(1,0){25}} \put(145,42){$a$}
\put(175,40){\circle{16}}\put(175,40){\circle{20}}
\put(175,42.5){\makebox[0pt][c]{\small$s_4$}}
\put(175,34.5){\makebox[0pt][c]{\scriptsize$\mi{True}$}}
}

\put(65,57){\vector(4,1){44}} \put(60,59){$\emps$}
\put(65,57){\vector(-4,-1){44}}

\put(-8,40){\vector(1,0){8}}
\put(10,40){\circle{20}}
\put(10,42.5){\makebox[0pt][c]{\small$s_1$}}
\put(10,34.5){\makebox[0pt][c]{\scriptsize$\mi{True}$}}
\put(25,40){\vector(1,0){25}} \put(35,42){$a$}
\put(65,40){\circle{20}}
\put(65,42.5){\makebox[0pt][c]{\small$s_2$}}
\put(65,34.5){\makebox[0pt][c]{\scriptsize$\mi{True}$}}
\put(80,40){\vector(1,0){25}} \put(90,42){$\emps$}
\put(120,40){\circle{20}}
\put(120,42.5){\makebox[0pt][c]{\small$s_3$}}
\put(120,34.5){\makebox[0pt][c]{\scriptsize$b_0$}}
\put(135,40){\vector(1,0){25}} \put(145,42){$a$}
\put(175,40){\circle{16}}\put(175,40){\circle{20}}
\put(175,42.5){\makebox[0pt][c]{\small$s_4$}}
\put(175,34.5){\makebox[0pt][c]{\scriptsize$b_1$}}

\put(0,-30){
\put(65,57){\vector(4,1){44}} \put(40,53){$\emps$}
\put(65,57){\vector(-4,-1){44}}

\put(10,40){\circle{20}}
\put(10,42.5){\makebox[0pt][c]{\small$s_1$}}
\put(10,34.5){\makebox[0pt][c]{\scriptsize$b_0$}}
\put(25,40){\vector(1,0){25}} \put(35,42){$a$}
\put(65,40){\circle{20}}
\put(65,42.5){\makebox[0pt][c]{\small$s_2$}}
\put(65,34.5){\makebox[0pt][c]{\scriptsize$b_1$}}
\put(80,40){\vector(1,0){25}} \put(90,42){$\emps$}
\put(120,40){\circle{20}}
\put(120,42.5){\makebox[0pt][c]{\small$s_3$}}
\put(120,34.5){\makebox[0pt][c]{\scriptsize$b_0{\wedge}b_1$}}
\put(135,40){\vector(1,1){25}} \put(140,52){$a$}
}

\put(0,-60){
\put(120,38){\vector(0,1){20}}
\put(120,38){\vector(-1,0){93}}
\put(70,40){$\emps$}
\put(10,40){\circle{20}}
\put(10,42.5){\makebox[0pt][c]{\small$s_1$}}
\put(10,34.5){\makebox[0pt][c]{\scriptsize$b_0{\wedge}b_1$}}
\put(25,40){\vector(1,1){25}} \put(30,52){$a$}
}
}
\end{picture}
\end{minipage}%
\caption{The NFA obtained as the result of $\mi{toNFA}((a\pla{aa{*}})\rept a)$}\label{figure:weight-example}
\end{figure}

\section{Discussion}
\subsection{Extensions of Regular Expressions and the Complexity of Their Matching}
We have discussed a REwLA matching algorithm.
Well-known variants of regular expressions other than REwLA include Extended Regular Expression and Regular Expression with Back References (abbreviated to ERE and REwBR, respectively).

An ERE is a regular expression that includes operators denoting negation and intersection. It  is similar to regular expressions with lookaheads and negative lookaheads.
There have been several investigations on ERE \cite{Pete02,RoVi03,IlSY03,Rosu07,OwRT09}.
Notably, the number of states of the DFA corresponding to an ERE
cannot be bounded by an elementary function of the size of the ERE \cite{StoM73,Rosu07}.
Therefore, its state-transition-machine-based implementation would be slow, at least in the worst case.
In fact, no matching algorithm whose running time is linear in the length of the input string and elementary in the size of the ERE is known\footnote{Rosu and Viswanathan \cite{RoVi03} proposed a matching algorithm running in $\mr{O}(n \cdot 2^{2m^2})$ time. However, the same authors later stated that this complexity estimate must be wrong \cite{Rosu07}.}.

Unlike ERE and REwLA, the language denoted by REwBR is not necessarily regular.
This might be the reason why its properties are not well understood \cite{Lars98,CaSY03,CaSa09}.
For the string matching problem, the complexity is known to be NP-complete \cite{Aho91}, and
an efficient algorithm seems impossible.

In this study, we have shown that REwLA matching can be computed in $\mr{O}(n \cdot 2^{2^m})$ time.
This is better than the known results for ERE and REwBR, and
suggests that REwLA is a useful extension of regular expressions.
The technically noteworthy point compared to ERE is that, by going through BFA,
we can properly handle even REwLA containing negative lookaheads.
Analyzing why REwLA is easier to handle than ERE is future work.

The most difficult part in implementing the proposed method is deciding the equivalence of propositional formulas.
A promising approach is to adopt binary decision diagrams.
Binary decision diagrams transform a formula into a normal form, and at the same time,
dynamically simplify the formula.
Therefore, we expect that they provide a natural implementation for the BFA states. This implementation might improve the complexity.

\subsection{Realizing Submatch Extraction with Regular Expressions}
To the best of the authors' knowledge,
Dub\'e and Feeley~\cite{DuFe00} are the first to point out the difficulty of obtaining submatches. They gave an algorithm to parse strings based on regular expressions. Laurikari~\cite{Laur01} discussed this problem in detail and proposed an implementation based on an automaton equipped with tags corresponding to the start and end of submatches and an ordering relation over them.
Theoretically, these methods can be formulated as constructions of weighted automata.
Cox \cite{Cox07} summarized the history and implementation of methods that perform regular expression matching without backtracking.

There are several methods for constructing weighted automata from weighted regular expressions:
Caron and Flouret~\cite{CaFl00} based on the Glushkov construction \cite{Glus61},
Champarnaud and Duchamp~\cite{ChDu04} based on the Brzozowski derivative \cite{Brzo64},
and Lombardy and Sakarovitch~\cite{LoSa05} based on Antimirov's partial derivative \cite{Anti96}.
Our approach is based on the Thompson construction \cite{Thom68}, but these methods also apply.

\subsection{On the Handling of Lookaheads}
A tree transducer \cite{FuVo98} is a tree automaton extended with output.
Tree transducers with regular lookahead have been studied in depth.
It is known that macro tree transducers with regular lookahead \cite{EnVo85} can be converted into equivalent macro tree transducers without lookahead.
Since the domain of a macro tree transducer is a regular tree language,
this fact also serves as an indirect proof that the language expressed by REwLA is regular.
Therefore, the observation itself that REwLA is regular is not new.
Note that the results on macro tree transducers do not use BFA; hence, the proof presented in this paper is entirely new.

By processing the input string from the end to the beginning,
the matching of regular expressions containing lookaheads can be computed easily.
First, the lookahead parts and the non-lookahead parts are each
converted into automata that recognize the reversals of the strings matching those subexpressions.
Next, the input string is scanned in reverse and the transitions of each automaton are computed in parallel. At this time, in the parts where a lookahead was required, a transition is allowed only when the automaton corresponding to that lookahead is in an accepting state.
Unfortunately, this method cannot handle lookbehinds and is also not suited to stream processing.
It is worth noting that this method can extend the results of this paper to those including lookbehinds.

Let us develop the idea mentioned above further. For each lookahead and lookbehind, record whether it succeeds at each character as a preprocessing step. Then matching of REwLA, including lookbehinds, is straightforward.
This approach is unsuitable for stream processing and requires $\mr{O}(n)$ space in addition to the input string.
The known results on tree transducers indicate that the preprocessing and the matching calculation can be merged into a single forward scan.
However, it is unclear whether this merging improves the efficiency.
The automaton corresponding to nested negative lookaheads and negative lookbehinds may be very large, if we naively apply determinization for each negation.
Since this situation may arise in the merging process, the complexity estimation requires a detailed examination.

In this paper, we assumed that the lookahead part does not contain the weight computation.
One may hope to extract the part matched by a lookahead. However, if we consider submatching inside the lookahead part, it is nontrivial to provide an appropriate semantics of submatches for a REwLA within a negative lookahead. Further investigation is future work.

{\bf Acknowledgement}\ 
The author is grateful to Kazuhiro Inaba (Google) for his discussions on the complexities and existing studies on translating REwLA to DFAs, Yasuhiko Minamide (Tsukuba University) for his instruction on strategies for regular expression matching, and the anonymous reviewers for their comments, which were useful for improving the presentation.

\section*{Note Added in Translation}
This article is author's unofficial English translation of: A. Morihata,  ``Sakiyomi tsuki seikihyougen no yuugenjoutai ohtomaton heno henkan'', (Computer Software, Vol. 29, No. 1, pp. 147--158, 2012, in Japanese. \doi{10.11309/jssst.29.1_147}). 
The technical content is the same as the original, except for the correction of the following typological mistakes.
\begin{itemize}
\item In the definition of $\mathcal{T}_B(e \cdot e')$, $\mathcal{T}_B$ must process $e'$ recursively.
\item In Lemma~\ref{lemma:weighted-2}, Lemma~\ref{lemma:weighted-1} must be referred.
\item In Corollary~\ref{coro:wnfa-cost}, Corollary~\ref{coro:wnfa-size} must be referred.
\item A few misspellings of people's names are fixed.
\end{itemize}
 The copyright of this material is retained by the Japan Society for Software Science and Technology (JSSST). This material is published with the agreement of the JSSST. Please comply with Copyright Law of Japan if any users wish to reproduce, make derivative work, distribute or make available to the public any part or whole thereof.

Those who are interested in this topic are also referred to the follow-up studies, including, but not limited to, Miyazaki and Minamide (JIP 2019, JIP 2023), Berglund et al. (JUCS 2021), and Mamouras and Chattopadhyay (POPL 2024).
\end{document}